
\documentclass[journal]{IEEEtran}
%\documentclass[journal,11pt, onecolumn, draftclsnofoot]{IEEEtran}

%
% If IEEEtran.cls has not been installed into the LaTeX system files,
% manually specify the path to it like:
% \documentclass[journal]{../sty/IEEEtran}

\usepackage{bm}
\usepackage[T1]{fontenc}
\usepackage{amsthm}
\usepackage{amssymb}   
\usepackage{booktabs}              
\usepackage{algorithm}  
\usepackage{threeparttable}

\usepackage{footnote} 
\usepackage{comment} 
\usepackage{tablefootnote}  

\usepackage{hyperref}
\hypersetup{colorlinks,allcolors=black}

\newtheorem{lemma}{Lemma}

\newtheorem{prop}{Proposition}

% Some very useful LaTeX packages include:
% (uncomment the ones you want to load)

% *** MISC UTILITY PACKAGES ***
%
%\usepackage{ifpdf}
% Heiko Oberdiek's ifpdf.sty is very useful if you need conditional
% compilation based on whether the output is pdf or dvi.
% usage:
% \ifpdf
%   % pdf code
% \else
%   % dvi code
% \fi
% The latest version of ifpdf.sty can be obtained from:
% http://www.ctan.org/pkg/ifpdf
% Also, note that IEEEtran.cls V1.7 and later provides a builtin
% \ifCLASSINFOpdf conditional that works the same way.
% When switching from latex to pdflatex and vice-versa, the compiler may
% have to be run twice to clear warning/error messages.

% *** CITATION PACKAGES ***
%
\usepackage{cite}
% cite.sty was written by Donald Arseneau
% V1.6 and later of IEEEtran pre-defines the format of the cite.sty package
% \cite{} output to follow that of the IEEE. Loading the cite package will
% result in citation numbers being automatically sorted and properly
% "compressed/ranged". e.g., [1], [9], [2], [7], [5], [6] without using
% cite.sty will become [1], [2], [5]--[7], [9] using cite.sty. cite.sty's
% \cite will automatically add leading space, if needed. Use cite.sty's
% noadjust option (cite.sty V3.8 and later) if you want to turn this off
% such as if a citation ever needs to be enclosed in parenthesis.
% cite.sty is already installed on most LaTeX systems. Be sure and use
% version 5.0 (2009-03-20) and later if using hyperref.sty.
% The latest version can be obtained at:
% http://www.ctan.org/pkg/cite
% The documentation is contained in the cite.sty file itself.

% *** GRAPHICS RELATED PACKAGES ***
%
%\ifCLASSINFOpdf
  %\usepackage[pdftex]{graphicx}
  \usepackage{graphicx}
\usepackage{amsmath}
% A popular package from the American Mathematical Society that provides
% many useful and powerful commands for dealing with mathematics.
%
% Note that the amsmath package sets \interdisplaylinepenalty to 10000
% thus preventing page breaks from occurring within multiline equations. Use:
\interdisplaylinepenalty=2500
% after loading amsmath to restore such page breaks as IEEEtran.cls normally
% does. amsmath.sty is already installed on most LaTeX systems. The latest
% version and documentation can be obtained at:
% http://www.ctan.org/pkg/amsmath

% *** SPECIALIZED LIST PACKAGES ***
%
\usepackage{algorithmic}
% algorithmic.sty was written by Peter Williams and Rogerio Brito.
% This package provides an algorithmic environment fo describing algorithms.
% You can use the algorithmic environment in-text or within a figure
% environment to provide for a floating algorithm. Do NOT use the algorithm
% floating environment provided by algorithm.sty (by the same authors) or
% algorithm2e.sty (by Christophe Fiorio) as the IEEE does not use dedicated
% algorithm float types and packages that provide these will not provide
% correct IEEE style captions. The latest version and documentation of
% algorithmic.sty can be obtained at:
% http://www.ctan.org/pkg/algorithms
% Also of interest may be the (relatively newer and more customizable)
% algorithmicx.sty package by Szasz Janos:
% http://www.ctan.org/pkg/algorithmicx

% *** ALIGNMENT PACKAGES ***
%
\usepackage{array}
% Frank Mittelbach's and David Carlisle's array.sty patches and improves
% the standard LaTeX2e array and tabular environments to provide better
% appearance and additional user controls. As the default LaTeX2e table
% generation code is lacking to the point of almost being broken with
% respect to the quality of the end results, all users are strongly
% advised to use an enhanced (at the very least that provided by array.sty)
% set of table tools. array.sty is already installed on most systems. The
% latest version and documentation can be obtained at:
% http://www.ctan.org/pkg/array

% IEEEtran contains the IEEEeqnarray family of commands that can be used to
% generate multiline equations as well as matrices, tables, etc., of high
% quality.

% *** SUBFIGURE PACKAGES ***
\ifCLASSOPTIONcompsoc
\usepackage[caption=false,font=normalsize,labelfont=sf,textfont=sf]{subfig}
\else
\usepackage[caption=false,font=footnotesize]{subfig}
\fi
\usepackage{url}
% url.sty was written by Donald Arseneau. It provides better support for
% handling and breaking URLs. url.sty is already installed on most LaTeX
% systems. The latest version and documentation can be obtained at:
% http://www.ctan.org/pkg/url
% Basically, \url{my_url_here}.

% *** Do not adjust lengths that control margins, column widths, etc. ***
% *** Do not use packages that alter fonts (such as pslatex).         ***
% There should be no need to do such things with IEEEtran.cls V1.6 and later.
% (Unless specifically asked to do so by the journal or conference you plan
% to submit to, of course. )

% correct bad hyphenation here
\hyphenation{op-tical net-works semi-conduc-tor}

\begin{document}
\bstctlcite{IEEEexample:BSTcontrol}
% paper title
% Titles are generally capitalized except for words such as a, an, and, as,
% at, but, by, for, in, nor, of, on, or, the, to and up, which are usually
% not capitalized unless they are the first or last word of the title.
% Linebreaks \\ can be used within to get better formatting as desired.
% Do not put math or special symbols in the title.
\title{Revisiting the Concrete Security of Goldreich's Pseudorandom Generator}
%
%
% author names and IEEE memberships
% note positions of commas and nonbreaking spaces ( ~ ) LaTeX will not break
% a structure at a ~ so this keeps an author's name from being broken across
% two lines.
% use \thanks{} to gain access to the first footnote area
% a separate \thanks must be used for each paragraph as LaTeX2e's \thanks
% was not built to handle multiple paragraphs
%

\author{Jing~Yang,~Qian~Guo,~Thomas~Johansson,~and~Michael~Lentmaier
%\thanks{Manuscript received February, 2021.}
\thanks{Jing Yang, Qian Guo, Thomas Johansson and Michael Lentmaier are with the Department of Electrical and Information Technology, Lund University, Sweden.}}

\maketitle

% As a general rule, do not put math, special symbols or citations
% in the abstract or keywords.
\begin{abstract}
 Local pseudorandom generators are a class of fundamental cryptographic primitives having very broad applications in theoretical cryptography. 
Following Couteau et al.'s work in ASIACRYPT 2018, this paper further studies the concrete security of one important class of local pseudorandom generators, i.e., Goldreich's pseudorandom generators.
Our first attack is of the guess-and-determine type. Our result significantly improves the state-of-the-art algorithm proposed by Couteau et al., in terms of both asymptotic and concrete complexity, and breaks all the challenge parameters they proposed.
For instance, for a parameter set suggested for 128 bits of security, we could solve the instance faster by a factor of about \(2^{61}\), thereby destroying the claimed security completely.
Our second attack further exploits the extremely sparse structure of the predicate $P_5$ and combines ideas from iterative decoding. This novel attack, named guess-and-decode, substantially improves the guess-and-determine approaches for cryptographic-relevant parameters. All the challenge parameter sets proposed in Couteau et al.'s work in ASIACRYPT 2018 aiming for 80-bit (128-bit) security levels can be solved in about \(2^{58}\) (\(2^{78}\)) operations. We suggest new parameters for achieving 80-bit (128-bit) security with respect to our attacks. We also extend the attack to other promising predicates and investigate their resistance.

% \keywords{Pseudorandom generators, Algebraic attacks, Guess-and-determine, .. }

\end{abstract}

% Note that keywords are not normally used for peerreview papers.
\begin{IEEEkeywords}
Goldreich's pseudorandom generators, guess-and-determine, guess-and-decode, iterative decoding, \(P_{5}\). 
\end{IEEEkeywords}

% For peer review papers, you can put extra information on the cover
% page as needed:
% \ifCLASSOPTIONpeerreview
% \begin{center} \bfseries EDICS Category: 3-BBND \end{center}
% \fi
%
% For peerreview papers, this IEEEtran command inserts a page break and
% creates the second title. It will be ignored for other modes.
\IEEEpeerreviewmaketitle

\section{Introduction}
Pseudorandom generators (PRGs) are one fundamental construction in cryptography, which derive a long pseudorandom output string from a short random string. One particular interesting question about PRGs is the existence in complexity class \texttt{NC}$^0$, i.e., each output bit depends on a constant number of input bits. Such constructions, named local pseudorandom generators, can be computed in parallel with constant-depth circuits, thus being highly efficient. A considerable research effort has been devoted to this problem. 

In 2000, Goldreich suggested a simple candidate one-way function based on expander graphs~\cite{goldreich2000candidate}, which has inspired a promising construction for PRGs in \texttt{NC}$^0$. It is constructed as below: given a secret seed $x$ of length $n$ and a well chosen predicate $P$ with locality $d(n)$: $\{0, 1\}^{d(n)} \mapsto \{0, 1\}$, choose $m$ subsets  $(\sigma^1, \sigma^2,...,\sigma^m)$, where each subset contains $d(n)$ disjoint indices of $x$ which are chosen randomly and independently. Let $x[\sigma]$ denote the subset of the bits of $x$ indexed by $\sigma$ and by applying $P$ on $x[\sigma]$, one output bit $P(x[\sigma])$ is obtained. The output string is generated by applying $P$ to all the subsets of bits of $x$ indexed by the sets $(\sigma^1, \sigma^2,...,\sigma^m)$, i.e., the output string is $P(x[\sigma^1])||P(x[\sigma^2])|| ...||P(x[\sigma^m])$. Goldreich advocates $m = n$ and depth $d(n)$ in $O(\log n)$ or $O(1)$, and conjectures that it should be infeasible to invert such a construction for a well-chosen predicate $P$ in polynomial time. The case of $d(n)$ in $O(1)$, which puts the construction into the complexity class \texttt{NC}$^0$, has received more attention due to the high efficiency.
%: he conjectures that most non-linear, non-degenerate predicates would work
%(which put the construction into \texttt{NC}$^1$ and \texttt{NC}$^0$, respectively)

Cryan and Miltersen first considered  the existence of PRGs in \texttt{NC}$^0$ \cite{cryan2001pseudorandom} and gave some results: they applied statistical linear tests on the output bits and ruled out the existence of PRGs in \texttt{NC}$_3^0$ (i.e., each output bit depends on three input bits) for $m \geq 4n$. 
%with stretch factor greater than four (i.e., $m \leq 4n$) are not pseudorandom and a large class of \texttt{NC}$_0$ generators with superlinear stretch fail to pass a statistical test. 
Mossel et al. further extended this non-existence to \texttt{NC}$_4^0$ for $m \geq 24n$ with a polynomial-time distinguisher, but provided some positive results for PRGs in \texttt{NC}$_5^0$ \cite{mossel2003epsilon}.  Specifically, they gave a candidate PRG in \texttt{NC}$_5^0$ instantiated on a degree-2 predicate, which is usually called $P_5$ defined as: 
$$P_5(x_1,x_2, x_3, x_4,x_5) = x_1 \oplus x_2 \oplus x_3 \oplus x_4 x_5,$$
with superlinear stretch while exponentially small bias. Such local PRGs are now commonly known as Goldreich's PRGs, and the ones instantiated on $P_5$ achieve the best possible locality and have received much attention. The existence of PRGs in \texttt{NC}$^0$ (as low as \texttt{NC}$_4^0$) was essentially confirmed in \cite{applebaum2006cryptography} by showing the possibility of constructing low-stretch ($m = O(n)$) PRGs through compiling a moderately easy PRG using randomized encodings. Applebaum et al. in \cite{applebaum2008pseudorandom} further gave the existence of PRGs with linear stretches by showing that the existence can be related to some hardness problems in, e.g., Max 3SAT (satisfiability problem).   
%jain2020indistinguishability

Other than the initial motivation for the efficiency reasoning, i.e., realizing cryptographic primitives that can be evaluated in constant time by using polynomially many computing cores, PRGs in \texttt{NC}$^0$ with polynomial stretches i.e., $m = \text{poly}(n)$, have numerous more emerging theoretical applications, such as {\em secure computation with constant computational overhead}~\cite{ishai2008cryptography,applebaum2017secure}, {\em indistinguishability obfuscation (iO)}~\cite{lin2017indistinguishability,gay2020indistinguishability}, {\em Multiparty-Computation (MPC)-friendly primitives}~\cite{meaux2016towards,grassi2016mpc,albrecht2015ciphers,canteaut2018stream}, {\em cryptographic capsules}~\cite{boyle2017homomorphic}, etc.  For example, a two-party computation protocol with constant computational overhead was given in ~\cite{ishai2008cryptography}, on the assumption of the existence of a PRG in \texttt{NC}$^0$ with a polynomial stretch,  together with an arbitrary oblivious transfer protocol. Thus, the  existence of poly-stretch PRGs in \texttt{NC}$^0$ is attractive and significant.
%canteaut2018stream

In \cite{applebaum2013pseudorandom}, Applebaum considered PRGs with long stretches and low localities and provided the existence of PRGs with linear stretches and weak PRGs with polynomial stretches, e.g., $m = n^s$ for some $s>1$, with a distinguishing gap at most $1 / n^s$. This work was later strengthened in \cite{applebaum2016dichotomy} by showing a dichotomy of different predicates: all non-degenerate predicates yield small-bias generators with output length $m = n^s$ for $s < 1.25$  while degenerate predicates are not secure against linear distinguishers for most graphs. The stretch was later extended to $s < 1.5$  for the special case $P_5$ in \cite{odonnell2014goldreich}.
 
%from the complexity theory point of view , while the concrete security of a local PRG given a predicate and certain stretch has not been extensively investigated. 

The mentioned works above all focus on checking the existence of potential PRGs in \texttt{NC}$^0$, establishing asymptotic security guarantees for them and exploring  appealing theoretical applications based on them; one main obstacle, however, for these advanced cryptographic primitives towards being practical comes from the lack of a stable understanding on the concrete security of these PRGs. In~\cite{couteau2018concrete}, Gouteau et al. first studied the concrete security of Goldreich's PRGs, especially the important instantiation on the $P_5$ predicate. Specifically, they developed a guess-and-determine-style attack and gave more fine-grained security guarantees for them. In the last part of their presentation at ASIACRYPT 2018, an open problem was raised:

{\em ``Can we improve the security bounds for $P_5$?''} 

In this paper we focus on this open problem and give an affirmative answer. 
%Motivated by practically instantiating these important theoretical primitives,  

Before stating our main cryptanalytic methods and results, we first review the common cryptanalysis techniques against local PRGs.

\subsection{Related Work in Cryptanalysis}
The main cryptanalysis tools for local PRGs include myopic backtracking algorithms, linear cryptanalysis, and algebraic cryptanalysis, etc.
\subsubsection{Myopic Backtracking Algorithms}
A Goldreich's PRG can be viewed as a random constraint satisfaction problem (CSP), thus the inversion of the PRG is equivalent to finding a planted solution for a CSP. Thus, some techniques and results from solving CSPs, e.g., 3-SAT, can be adopted. The so-called myopic backtracking method is such one commonly used algorithm, the basic idea of which is to gradually assign values to some input variables in every step based on newly read $t$ constraints and previous observations until a contradiction is introduced. Every time when the new partial assignments contradict with some constraints, the algorithm backtracks to the latest assigned variable, flips the assigned value and continues the process. After sufficiently many steps, the algorithm will surely recover the secret.

Goldreich considered the myopic backtracking method on the proposed one-way function \cite{goldreich2000candidate}, by reading one output bit at each step and computing all possible values of input bits which would produce the read output bits. The results show that the expected size of the possible values is exponentially large. Alekhnovich et al.  further gave exponential lower bounds of the running time for myopic algorithms in \cite{alekhnovich2006exponential}.
%and drunk algorithms. 
It showed that Goldreich's function is secure against the myopic backtracking algorithm when it is instantiated on a 3-ary predicate $P(x_1, x_2, x_3) = x_1 \oplus x_2 \oplus x_3$. Since a predicate should be non-linear (otherwise a system can be easily solved using Gaussian elimination), the predicate is extended to a more general case involving a non-linear term: $P_d(x_1,...,x_d) = x_1 \oplus x_2 \oplus ... \oplus x_{d-2} \oplus x_{d-1} x_d$. They showed that for most $d$-ary predicates and some $t$, the expected success probability of the basic $t$-myopic algorithm (i.e., reading $t$ constraints at each step) in inverting is $e^{-\Omega(n)}$. The results were verified over small Goldreich's PRGs using the SAT solver MiniSat. 

\subsubsection{Linear Cryptanalysis}
Each constraint of a local PRG can be viewed as a linear equation with a noise, i.e., $y_i = \sum_{\substack{j \in \hat{\sigma}^{i}}} x_j + e_i,$ where $\hat{\sigma}^{i}$ is the subset of linear terms of the $i$-th subset $\sigma^{i}$, while $e_i$ is a biased noise, whose distribution is determined by the chosen predicate. For the $P_5$ predicate, $e_i$ has the distribution $P(e_i = 1) = 1/4, P(e_i = 0) = 3/4$.

One simple way of linear cryptanalysis could be to find equations sharing the same linear variables and view the noises as independent and identically distributed. A majority rule can be applied on the noises and a corresponding value can be assigned to the linear part. Thus a linear equation is obtained and by exploring many such linear equations, the secret could be recovered by solving the derived linear system.

An improved version is to build enough noisy equations with two linear terms of the form $x_{\sigma_i} + x_{\sigma_j} (+ e_i + e_j) = y_i + y_j$, by XORing a pair of equations sharing the other linear variables, and then apply some known algorithms, e.g., semidefinite programming \cite{applebaum2016cryptographic}, to get a solution $\hat{x}$ satisfying a large fraction of the equations. This solution would be highly correlated with the true one and the system could be inverted with high probability based on it \cite{bogdanov2009security}.
 %with $\epsilon-$leakage.
 %First, pass over all input variables and flip the values of a variable if it appears in too many constraints that are violated by $x'$. In the second step, identify a large set of inputs for which the assignment is likely to be correct and unassign all other variables. Finally, using brute-force to complete the remaining assignment.
\subsubsection{Algebraic Attacks} 
In an algebraic attack, the equation system is extended by, e.g., multiplying some equations with lower-degree terms until a solution is possible to be found by, e.g., linearization, Gaussian elimination or by computing a Gr\"{o}bner basis of the expanded system. In \cite{applebaum2018algebraic}, Applebaum and Lovett analyzed how the underlying predicate affects pseudorandomness using algebraic attacks and gave some advice on the choices of predicates in terms of resiliency, algebraic degree and bit fixing degree.   Algebraic attacks based on linearization and Gr\"{o}bner base algorithms were further considered in \cite{couteau2018concrete}, and some results on concrete choices of parameters were given.

%Most work above try to establish asymptotic security guarantees, in \cite{couteau2018concrete}, authors considers the concrete security of local PRGs. They consider how several cryptanalysis techniques, e.g., guess-and-determine, algebraic attacks with the Gr\"{o}bner basis approach, would affect the security level of a PRG instantiated on common predicates and give some concrete security bound against these cryptanalysis. 

\subsection{Contributions} 
\label{subsec:contributions}  

In this paper, we present new attacks which significantly improve the complexity of inverting the local PRGs instantiated on the $P_5$ predicate.

\begin{itemize}
\item[--] Our first result is a novel guess-and-determine-style attack with much lower complexity than the results presented in~\cite{couteau2018concrete}. 
% and thus achieving a tighter bound on the security margin.
  We develop theoretical and also numerical analysis about how many guesses that are needed for various $(n,s)$ parameters, where $n$ and $s$  denote the seed size and stretch, respectively, and experimentally verify the analysis for some small parameters. \\

This approach is basically a greedy method. We classify the equations occurring during the guess-and-determine attack into three different classes, by the number of included monomials. The class of equations with two monomials are desired, as guessing variables occurring in these equations could produce some ``free'' determined variables. 
%We also have intermediate classes of equations that need to be further reduced to equations in other classes.
 We then design our guessing criterion over the equation classes, to generate as many ``free'' variables as possible when assuming for a limited number of guesses. \\

Our guess-and-determine attack reduces the asymptotic complexity from $\tilde{O}( 2^{\frac{n^{2-s}}{2}})$ in~\cite{couteau2018concrete} to $\tilde{O}( 2^{\frac{n^{2-s}}{4}})$, thereby achieving a square-root speed-up.
Regarding the concrete complexity, we could solve all the challenge parameters suggested in~\cite{couteau2018concrete} with complexity much below their claimed security levels. As shown in Table~\ref{tab:gain}, our complexity gains (measured by the ratio of the two complexity numbers) range from \(2^{23}\) to \(2^{30}\) for parameters aiming for \(80\) bits of security, and from \(2^{40}\) to \(2^{61}\) for those aiming for \(128\) bits. \\
\item[--] Based on the guessing strategies proposed in the first attack, we further exploit the extremely sparse structure of $P_5$ and combine ideas from iterative decoding for solving a linear system. Our method differs from the classic iterative decoding or belief propagation, since the system in our case is quadratic and no a-priori information is available. 
We design new belief propagation rules for this specific setting.  
  This is a novel method for solving a system of non-linear equations and we call it the {\em guess-and-decode} approach. The gain of it compared to the first attack comes from 1) the better information extraction from the equations with a quadratic term, since the guess-and-determine approach only exploits the generated linear equations; and 2) the soft probabilities used in the iterative decoding, since inverting a linear matrix in the  guess-and-determine approach can be regarded as using ``hard'' (binary) values. \\
  \\  
  Experimental results show that, with a smaller number of guesses, the resulting system has a good chance to be determined and then the secret could be fully recovered. 
  As shown in Table~\ref{tab:gain}, the new guess-and-decode approach could further significantly improve the guess-and-determine approach for all the challenge parameters proposed in~\cite{couteau2018concrete}. For instance, the improvement factor is as large as \(2^{22}\) for the parameter set \((4096,1.295)\), and could be even larger for a parameter with a bigger \(n\) value in our prediction. With this new method, the challenge parameter sets proposed in~\cite{couteau2018concrete} for 128-bit security are insufficient even for providing security of 80 bits. \\
  \\
   We also suggest new challenge parameters to achieve 80-bit and 128-bit security levels for various seed sizes. \\
  \item[--] Lastly, we extend the attacks to other promising predicates of the type of XOR-AND and XOR-MAJ, which are the two main types of predicates suggested for constructing local PRGs. We investigate their resistance against our attacks and give some initial sights on their possibly safe stretches.
\end{itemize}

\begin{table}[!t]
	\renewcommand{\arraystretch}{1.2}
	\centering
	\caption{The Complexity Comparison of the Algorithms for Solving the Challenge Parameters Proposed in~\cite{couteau2018concrete}\tnote{}.}
	\label{tab:gain}
	%\small 
	% \addtolength{\tabcolsep}{-5pt}  
	\begin{threeparttable}
	\begin{tabular}{c|c|r|r|r}
		\toprule
		Security Level & (\(n, s\)) &   \cite{couteau2018concrete} & Sec.~\ref{sec:method} & Sec.~\ref{sec:guess_decode}  \\ \hline
		80 bits & (\(512, 1.120\)) & \(2^{91}\) & \(2^{61}\) & \(2^{52}\) \\
		& (\(1024, 1.215\)) & \(2^{90}\) & \(2^{66}\) & \(2^{53}\) \\
		& (\(2048, 1.296\)) & \(2^{91}\) & \(2^{68}\)  & \(2^{57}\) \\
		& (\(4096, 1.361\)) & \(2^{91}\)  & \(2^{68}\)  & \(2^{58}\) \\
		\hline
		128 bits & (\(512, 1.048\))  & \(2^{140}\) & \(2^{79}\)& \(2^{68}\) \\
		& (\(1024, 1.135\)) & \(2^{140}\) & \(2^{93}\) & \(2^{72}\)  \\
		& (\(2048, 1.222\))  & \(2^{139}\) & \(2^{98}\)& \(2^{77}\) \\
		& (\(4096, 1.295\)) & \(2^{140}\)  & \(2^{100}\)  & \(2^{78}\) \\			 
		%\cline{2-8}
		\bottomrule	
	\end{tabular}%[5pt] 
	\begin{tablenotes}
		\item[]  The column ``Sec.~\ref{sec:method}' shows the complexity of the guess-and-determine attack and the column ``Sec.~\ref{sec:guess_decode}'' shows the complexity of the guess-and-decode attack.  
	\end{tablenotes}
	\end{threeparttable}
\end{table}

% \subsubsection{Technical Overview}
% \label{subsubsec:technical_overview}

% We briefly overview the novel technical ideas in the newly-developed attacks.
%\label{subsubsec:cautious_note}

\noindent {\bf Discussions} It is pointed out in~\cite{boyle2019efficient} that  ``...it (building cryptographic capsule upon the group-based homomorphic secret sharing together with Goldreich's low-degree PRG in \cite{boyle2017homomorphic}) is entirely impractical: Goldreich's PRG requires very large seeds...'', and the best-known instantiations of some recent attractive  applications such as Pseudorandom Correlation Generators~\cite{boyle2019efficient,DBLP:conf/crypto/BoyleCGIKS20}  are based on computational assumptions such as LPN (learning parity with noise) and Ring-LPN, rather than on Goldreich's low weight PRGs. Our novel attacks further diminish their practicality, though their significance in theoretical cryptography is unaffected. We strongly recommend to instantiate a more complex local PRG with higher locality and/or more non-linear terms, if relatively high stretches and high security are required. Then, the research on the concrete security of the new PRGs should be renewed. However, in some special applications where drastic limits on the number of output bits are not an issue, while the significant advantages from the extremely low locality, multiplicative and overall complexity for deriving output bits are more desired, local PRGs instantiated on $P_5$ would probably still be considered. Our attacks provide better understanding of the concrete security of Goldreich's PRGs in a more fine-grained manner, and shed light on producing better cryptanalytical works on the MPC-friendly primitives with similar inherent structures.
  
\subsection{Organization}
\label{subsec:organization}
The rest of this paper is organized as follows. We first give some preliminaries and briefly describe the guess-and-determine attack in \cite{couteau2018concrete} in Section \ref{sec:preliminaries}. We then present our improved guess-and-determine attack in Section \ref{sec:method}, describing how it works and providing theoretical analysis and experimental verification. In Section \ref{sec:guess_decode}, we thoroughly describe the guess-and-decode attack and verify it by extensive experimental results. We extend the attacks to other promising predicates in Section \ref{sec:extension} and lastly conclude the paper in Section \ref{sec:conclusion}.

%%% Local Variables:
%%% mode: latex
%%% TeX-master: "../pseudorandom_generator"
%%% End:
    
\section{Preliminaries} \label{sec:preliminaries}
\subsection{Notations}
Let $GF (2)$ denote the finite field with two elements. Thus, the operation of addition ``$+$'' is equivalent to the ``$\oplus$'' operation. Our paper focuses on Goldreich's PRGs instantiated on the $P_5$ predicate. Throughout the paper, we use $x \in \{0, 1\}^n$ to denote the secret seed of size $n$ and $x_i~(1 \leq i \leq n)$ to denote the $i$-th bit of $x$. We use $y \in \{0, 1\}^m$ to denote the output string of size $m$ and $m = n^s$, where $s$ is the expansion stretch. Each bit of $y$, denoted as $y_i$ for $1 \leq i \leq m$, is derived by applying $P_5$ on a 5-tuple subset of the seed variables. These seed variables are indexed by a publicly known index subset $\sigma^i = [\sigma_1^i, \sigma_2^i, \sigma_3^i, \sigma_4^i, \sigma_5^i]$, where indices are distinct and randomly chosen. Thus, $y_i$ is generated as below: 
$$
x_{\sigma_1^i} + x_{\sigma_2^i} + x_{\sigma_3^i} + x_{\sigma_4^i} x_{\sigma_5^i} = y_i.
$$
We call such a relation an equation. 

For a binary variable $u$, we use $p_{u}^y$ or $P(u = y)$ to denote the probabilities of $u$ being the value of $y$, where $y \in \{0, 1\}$. The Log-Likelihood Ratio (LLR) value of $u$ is defined as 
\begin{align}
	L_u = \log \frac{P(u = 0)}{P(u = 1)},
\end{align}
i.e., the logarithmic value of the ratio of probabilities of $u$ being 0 and being 1. All the secret bits are assumed to be uniformly random distributed, thus the initial LLR values for them are all zero.

\subsection{The Guess-and-Determine Attack in \cite{couteau2018concrete}}
In a guess-and-determine attack, some well-chosen variables of a secret are guessed and more other variables could be correspondingly determined according to predefined relationships connected to these guessed variables. If the partial guessed variables are guessed correctly, the further determined variables would be correct as well. Thus, one hopes that by guessing a smaller number than the expected security level of variables, all the other variables can be determined. 

A guess-and-determine attack on Goldreich's PRG is given in \cite{couteau2018concrete}. The basic idea is to guess enough secret variables and derive a system of linear equations involving the remaining variables and by solving this linear system, the secret is expected to be recovered.  

When guessing a variable, those equations involving this guessed variable in the quadratic terms would become linear. The guessing strategy in \cite{couteau2018concrete} is then to always guess the variable which appears most often in the quadratic terms of the remaining quadratic equations, thus to obtain a locally optimal number of linear equations for each guess. This process is iteratively performed until enough linear equations are derived.  

Besides, some ``free'' linear equations can be obtained before the guessing phase by XORing two equations sharing a same quadratic term, which is called a ``collision''. The average number of linear equations $c$ derived in this way has been computed in \cite{couteau2018concrete}. 

The guessing process can be stopped once $n-c$ linear equations are obtained (with the number of guesses included). Suppose that after guessing $\ell$ variables, the stopping condition is satisfied. The algorithm will enumerate over all $2^\ell$ possible assignments for these guessed variables, and for each assignment $e$, a distinct system of linear equations would be derived. Let $A_e$ denote the matrix of this linear system whose rank could be equal to or smaller than $n$. The paper shows that when the rank of $A_e$ is smaller than $n$, an invertible submatrix with fewer variables involved can almost always be extracted and thus a fraction of variables can always be recovered. The remaining variables can be easily obtained by injecting those already recovered ones. 

In \cite{couteau2018concrete}, an upper bound of the expected number of guesses, denoted as $\ell$, is given as $\ell \leq \lfloor \frac{n^2}{2m} + 1  \rfloor$, and the asymptotic complexity is $O(n^2 2^{\frac{n^{2-s}}{2}})$, where $O(n^2)$ is the asymptotic complexity for inverting a sparse matrix. The attack is experimentally verified and some challenge parameters under which the systems are resistant against the attack are suggested.

\section{New Guess-and-Determine Cryptanalysis of Goldreich's PRGs with $P_5$} \label{sec:method}
In this section, we give a new guess-and-determine attack on Goldreich's pseudorandom generators. The fundamental difference with the guess-and-determine attack in \cite{couteau2018concrete} is that the guessing process in our attack is ``dynamic'', by which we mean that the choice of a new variable to guess depends not only on previously guessed variables but also on their guessed values. The main observation for our attack is that some variables could be determined for free after having guessed some variables, and the goal of our attack is to exploit as many such ``free'' variables as possible. To achieve so, we first define three different equation classes, {\bf { Class I, II}} and {\bf III}, which include different forms of equations, and further design guessing strategies based on them. 
%For example, for an equation $x_{\sigma_1^i} + x_{\sigma_2^i} + x_{\sigma_3^i} + x_{\sigma_4^i}x_{\sigma_5^i} = 1$, if $x_{\sigma_1^i}, x_{\sigma_2^i}, x_{\sigma_3^i}$ have been guessed to be 1, 1, 0, respectively, we would get $x_{\sigma_4^i}x_{\sigma_5^i} = 1$ and further  $x_{\sigma_4^i} =1, x_{\sigma_5^i} =1$ for free. 
%For each guessing, some quadratic equations with five variables would be reduced to a quadratic or linear equation with fewer variables and if an equation involves several guessed variables, it would turn into a very short equation with one or two variables, which could introduce ``free'' information. 

\subsection{Equation Classes and ``Free'' Variables} \label{subsec:equation_classes}
We categorize equations generated during the guessing process into three classes according to the number of terms.\\ 
\medskip

\fbox{
\parbox{0.45\textwidth}{%0.45
{\bf Class I} includes equations having no less than four terms, either quadratic or linear, with forms as below:
\begin{align}
x_{\sigma_1^i} + x_{\sigma_2^i} + x_{\sigma_3^i} + x_{\sigma_4^i}x_{\sigma_5^i} = y_i  \label{eq:Class1_1}, \\
x_{\sigma_1^j} + x_{\sigma_2^j} + x_{\sigma_3^j} + x_{\sigma_4^j} = y_j \label{eq:Class1_2}.
\end{align}
}}
\medskip

\noindent
Equations with more than four terms, which can be generated, e.g., from XORing two equations sharing a  same  quadratic term, are also categorized into Class I. The equations with the form in \eqref{eq:Class1_1} are the initially generated equations. If one variable in the quadratic term of such an equation, e.g., $x_{\sigma_4^i}$ or $x_{\sigma_5^i}$, is guessed as 1, the equation would be transformed to an equation in \eqref{eq:Class1_2}. While if the guessed value is 0 or a variable in the linear terms is guessed, we would get an equation categorized into another class, Class II. \\

\medskip

\fbox{
\parbox{0.45\textwidth}{%0.45
\noindent {\bf Class II} includes those equations having three terms, either quadratic or linear, with forms as below:
\begin{align}
x_{\sigma_1^i} + x_{\sigma_2^i} + x_{\sigma_3^i}x_{\sigma_4^i} = y_i \label{eq:Class2_1},\\ 
x_{\sigma_1^j} + x_{\sigma_2^j} + x_{\sigma_3^j} = y_j \label{eq:Class2_2} . 
\end{align}
}}

\medskip

\noindent
As mentioned before, an equation with the form in \eqref{eq:Class2_2} could be obtained if one variable in the quadratic term of an equation in \eqref{eq:Class1_1} is guessed as 0. It can also be derived when one arbitrary variable in an equation in \eqref{eq:Class1_2} is guessed (either 1 or 0), or if one variable in the quadratic term of an equation in \eqref{eq:Class2_1} is guessed as 1. Equations with the form in \eqref{eq:Class2_1} can be derived by guessing one linear variable of an equation in \eqref{eq:Class1_1}. \\

\medskip

\fbox{
\parbox{0.45\textwidth}{%0.45
\noindent {\bf Class III} includes those equations having only two terms, either quadratic or linear,  with forms as below:
\begin{align}
x_{\sigma_1^i} + x_{\sigma_2^i}x_{\sigma_3^i} = y_i \label{eq:Class3_1},\\ 
x_{\sigma_1^j} + x_{\sigma_2^j}  = y_j \label{eq:Class3_2}.
\end{align}
}}

\medskip

\noindent
If one linear term of an equation in \eqref{eq:Class2_1} is guessed, it will be transformed to an equation with the form of \eqref{eq:Class3_1}. An equation with the form of \eqref{eq:Class3_2} can be derived from: guessing one variable in the quadratic term in \eqref{eq:Class2_1} as 0; or guessing one variable in \eqref{eq:Class2_2} (either 1 or 0); or guessing one variable in the quadratic term in \eqref{eq:Class3_1} as 1. \\

\medskip

If we guess a variable which is involved in an equation in Class III, some ``free'' information could be obtained, which can happen in the following cases:
\begin{itemize}
	\item[(1)] For a linear equation $x_{\sigma_1^i} + x_{\sigma_2^i}  = y_i$, when one variable, e.g., $x_{\sigma_1^i}$, is guessed, the other variable, $x_{\sigma_2^i}$ in this case, can be directly derived as $x_{\sigma_2^i}  = x_{\sigma_1^i} + y_i$ for free;	
	\smallskip
	\item[(2)] For a quadratic equation $x_{\sigma_1^i} + x_{\sigma_2^i}x_{\sigma_3^i} = y_i$, if the linear variable $x_{\sigma_1^i}$ is guessed to be different from $y_i$, $x_{\sigma_2^i}$ and $x_{\sigma_3^i}$ could be easily derived as $x_{\sigma_2^i} = 1, x_{\sigma_3^i} = 1$, thus two  ``free'' variables are obtained;
	\smallskip
	\item[(3)] If any of the variables in the quadratic term is guessed as 0 in the above quadratic equation, the linear term is derived to be $x_{\sigma_1^i} = y_i$ for free;
	\smallskip
	\item[(4)] If $x_{\sigma_1^i}$ is guessed as the value of $y_i$ in the above quadratic equation, an equation with only one quadratic term would be obtained, i.e., $x_{\sigma_2^i}x_{\sigma_3^i} = 0$. There are no ``free'' variables obtained at the moment. If at a future stage any one of the two variables, $x_{\sigma_2^i}$ or $x_{\sigma_3^i}$, is guessed to be 1, the other one can be derived as 0 for free. 
	%any of the quadratic variable is guessed to be 0 in a quadratic equation $x_{\sigma_1^i} + x_{\sigma_2^i}x_{\sigma_3^i} = y_i$, the linear term is derived as $x_{\sigma_1^i} = y_i$ freely;
	%In fact, for the quadratic term, it is more reasonable to guess $x_{\sigma_1^i} = y_i$ since the term $ x_{\sigma_2^i}x_{\sigma_3^i}$ is more likely to be 0 (with probability 3/4) than 1 (with probability 1/4). This can be seen as a specific kind of 'free' information, but we do not use it here, instead, it will be used in the iterative decoding in \eqref{sec:iterative}.
	%\item[(4)]  If an equation with only a quadratic term like $x_{\sigma_2^i}x_{\sigma_3^i} = 0$ is obtained when $x_{\sigma_1^i}$ is guessed the same as $y_i$, no ``free'' variables would be obtained in this case. We could keep this equation in Class III and if at a future stage any one of the two variables, $x_{\sigma_2^i}$ or $x_{\sigma_3^i} $, is guessed to be 1, then the other one can be derived as 0 freely.
\end{itemize}

%For example, for an equation in \eqref{eq:Class3_2}, if $x_{\sigma_1^j}$ is guessed,  $x_{\sigma_2^j}  = y_j + x_{\sigma_1^j}$ could be derived for free.  
Thus, the criterion for the guessing is always guessing variables in equations from Class III if it is not empty and thus getting  ``free'' variables. Every time after guessing one variable, some equations in Class I and Class II could be transformed to equations in Class II or Class III. So it is highly likely that there always exist equations in Class III to explore for ``free'' variables. Furthermore, after plugging in the ``free'' variables, it could happen that more ``free'' variables would be derived. Thus, the required number of guesses can be largely reduced. That is the motivation of categorizing equations into three different classes and we next describe the new guess-and-determine attack designed over the defined equation classes.
\subsection{Algorithm for the New Guess-and-Determine Attack} \label{subsec: guessing_algorithm}
Algorithm \ref{alg:guess_and_determine} gives the general process of the proposed guess-and-determine attack. For convenience of description, we use the term {\it {equation reduction}} to denote the process of plugging in the value of a variable, either guessed or freely derived, and transforming the equations with the variable involved to ones with lower localities. Below we give the details of the attack.

\begin{algorithm}[!t] 
	\caption{The new guess-and-determine attack} 	
	\textbf{Input} 
	$m$ quadratic equations derived from a length-$n$ secret according to the $P_5$ predicate
	\\	
	\textbf{Output} 
	the secret
	\begin{algorithmic}[1] 	\label{alg:guess_and_determine}
		%\Procedure{DeriveLinearSystem($seed, n, m$)}{}
		%\For{\texttt{$q=1,\ldots,t$}}
		\STATE Exploring linear equations obtained through collisions \label{step:collisions}
		\STATE Set $t = 0$, back up the current system, mark all variables as ``non-reversed''
		\STATE Guess a variable and perform equation reduction following Algorithm \ref{alg:guess}, $t = t + 1$, back up the derived system\label{step:guess}
		\IF {no less than $n$ linear equations are derived}
			\STATE solve the linear system
			\IF {the recovered secret is correct}
				\RETURN the recovered secret
			\ELSE 
				\STATE trace back: \label{step:trace_back}
				\IF {the $t$-th guessed variable has been marked as ``reversed''}
					\STATE delete the $t$-th back-up system, recover the variable as ``non-reversed''
					\STATE $t = t -1$, go to step 9% \ref{step:trace_back}
				\ELSE 
					\STATE delete the $t$-th back-up system, mark the variable as ``reversed''
					\STATE reverse the guessed value of the variable and go to step 3 over the $(t-1)$-th back-up system without increasing $t$%\ref{step:guess}
				\ENDIF
			\ENDIF
		\ELSE 
			\STATE go to step 3%\ref{step:guess}			
		\ENDIF
		%\EndProcedure
	\end{algorithmic}
\end{algorithm}

Before guessing, as done in \cite{couteau2018concrete}, we first explore linear equations obtained through XORing two equations sharing a same quadratic term. The quadratic terms would be canceled out and only the linear terms of the two equations remain. If these two equations further share one or two linear terms, the shared variables should be canceled out as well. Thus, linear equations derived in this way could have two, four, or six variables, which are put into Class III, Class I and Class I, respectively. Note that when any two equations share the same variables in the linear and quadratic part, respectively, it already leads to a distinguishing attack. The expected number of linear equations derived in this way has been given in \cite{couteau2018concrete}. 

After finding out the linear equations derived from collisions in the quadratic terms, we back up the current system and start the guessing process. We follow the rules in Algorithm \ref{alg:guess} to choose variables for guessing. Specifically, we always choose a variable in an equation in Class III, since it is more likely to obtain ``free'' variables. If Class III is empty, we guess a variable in an equation from a less attractive class, Class II, or Class I if  Class II is also empty. As for choosing which variable to guess, we use the following criterion: we always choose the variable appearing the most number of times in the local class, e.g., Class III if a variable in an equation in Class III is guessed, since it could introduce more  ``free'' variables or transform more equations; if all variables have the same occurrences locally, choose the variable which appears most often in the global  system of equations, thus more equations would be transformed.

\begin{algorithm}[!t] 
	\caption{Algorithm for guessing a variable} 
	\textbf{Input} 
	A system of equations
	%\textbf{Output} A variable to guess
	\begin{algorithmic}[1] \label{alg:guess}
		%\Procedure{DeriveLinearSystem($seed, n, m$)}{}
		%\For{\texttt{$q=1,\ldots,t$}}
		%\STATE Set the number of guesses $j$ and linear equations $s$ as zero
		%\STATE Find collisions in Class I 
		%\WHILE {Condition on number of linear equations is not satisified}\label{step:condtion} %$s+ j < n$
		\IF {{\bf Class III} is not empty}
		\STATE guess the variable appearing most often in {\bf Class III}
		\STATE plug in the guessed variable and perform equation reduction \label{step:class3}\\
		\STATE continually plug in  ``free'' variables and perform equation reduction, if there are any, until there are no more ``free'' variables
		\ELSIF {{\bf Class II} is not empty}
		\STATE guess the variable appearing most often in {\bf Class II} \label{step:class2}
		\STATE plug in the guessed variable and perform equation reduction
		\ELSE  
		\STATE guess the variable appearing most often in {\bf Class I} \label{step:class1}
		\STATE plug in the guessed variable and perform equation reduction			
		\ENDIF
		%\EndProcedure
	\end{algorithmic}
\end{algorithm}

Every time when one variable is guessed, an equation involving this guessed variable would be transformed to either a linear equation in the same class or an equation (could be quadratic or linear) in the next class. The goal is to transform as many equations as possible to Class III such that when guessing one variable in Class III, it is more likely to get some ``free'' variables. After each guess and corresponding equation reduction, we should also plug in the ``free'' variables, if there are any, and perform equation reduction. This step might introduce more ``free'' variables and we continue this process until there are no further  ``free'' variables.

Every time after performing equation reduction for a guessed and corresponding freely determined variables, we always back up the derived system, in case the guesses are not correct and at some future stage, tracing back is needed. If there are enough linear equations derived after some guesses, we would stop guessing and solve the linear system. By ``enough'' we use the condition in \cite{couteau2018concrete} for key recovery, i.e., the number of linear equations (including the guessed and freely determined variables and linear equations derived by finding collisions in quadratic terms) is not smaller than $n$. The rank of the matrix for the linear system does not have to be $n$, as it shows in \cite{couteau2018concrete} that an invertible subsystem with fewer variables involved can almost always be extracted and solved. The remaining small fraction of variables can be easily recovered by injecting those already recovered ones.    
%One can also consider a distinguishing attack, but the required number of guesses is more and the attack is not as strong as key recovery, so we only consider key recovery through solving systems of linear equations.

If the recovered secret is correct, which can be verified by checking if it can produce the same output sequence, the attack succeeds and stops. While if not, we need to trace back: either 1) reversing the guessed value of the last guessed variable if it has not been reversed before and perform equation reduction based on the last back-up system; or 2) tracing back more steps until to a guessed variable which is not reversed before. Since we have backed up in each stage, we can retrieve the desired system of equations, perform equation reduction over that system, and delete all subsequent back-ups when we trace back. If some conflicts happen during the reduction, for example, we have guessed a variable as 1 while some other equations give the information that this variable should be 0, we stop going into deeper and immediately trace back. 
%If we trace back to the root node, it means that the guessing value for the very first guess is not correct and we need to reverse it. 
%to the previous stage and reverse the guessed value of the last guessed variable. If the last guessed variable has already been marked as ``reversed'', i.e., 0 and 1 are both tried, and neither succeeds, indicating that at least one of the previous guessed variables is wrong, we need to trace back one step further.

\begin{figure}[!ht]
	\centering
	\includegraphics[width=0.48\textwidth]{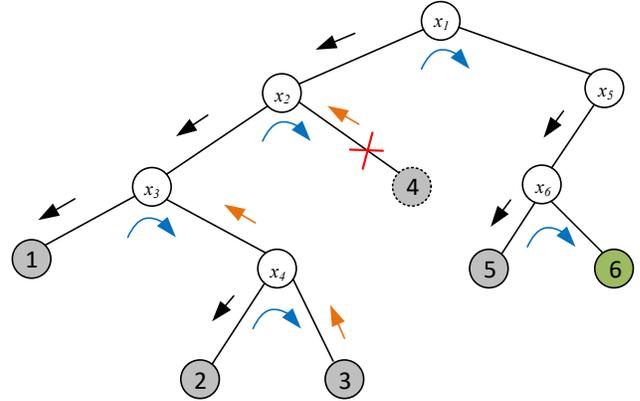}
	\medskip
	\caption{A  simple illustration of the guessing process}
	\label{fig:guess_process}
\end{figure}

The path of the guessing is a binary tree while the tree is highly likely to be irregular, i.e., the depths for different paths could be different, but with small gaps. Fig. \ref{fig:guess_process} shows a simple illustration of the guessing process. The leaf nodes, which are color-filled, represent the process to solve a system of linear equations when enough linear equations are derived. The gray-filled ones indicate the wrong paths, i.e., some guesses are not correct, while the green-filled one is the correct one and the guessing stops at this node. Black arrows indicate normal guessing steps, while blue arrows indicate the processes of reversing the guessed values of the last guessed variables and orange arrows indicate the processes of tracing back. At the leaf node with index 4, some conflicts happened during the equation reduction, which indicates that some previous guessed values are not correct, e.g., $x_1$ is not correctly guessed in this case. So we need to reverse the guessed value of it and continue guessing. In this simple example, we have visited  six nodes, meaning that we have guessed six variables, and solved four systems of linear equations until we find the correct path. We could then get the correct values of the variables $x_1, x_5, x_6$ as (1, 0, 1) (we denote the left direction  as guessing 0 and right direction as guessing 1), and other variables can be recovered by solving the derived linear system at the leaf node with index 6.

In practice, guessing 0 or 1 has a negligible impact on the attack complexity for the wrong paths, as both values would be tried. However, for the correct path, for an equation in Class III, e.g., $x_{\sigma_1^i} + x_{\sigma_2^i}x_{\sigma_3^i} = y_i$, $x_{\sigma_1^i}$  is more likely to equal $y_i$ than the complement. Thus we set the criterion that when guessing the linear term in a quadratic equation in Class III, we always first guess it to be equal to the value of the equation.

\subsection{Theoretical analysis}
In this subsection, we derive theoretical analysis on the required number of guesses and the complexity of the attack. 

\begin{prop}
	{(\bf Number of guesses)}. For any instance with $n$ variables, $n^s$ equations and $c$ collisions, the average number of guesses required to build $n$ linear equations is:
	\begin{equation} \label{eq:theoretical}
		\ell =  \left\lceil \frac{n -c}{4n^{s-1} + 2} + 1 \right\rceil.
	\end{equation}
\end{prop}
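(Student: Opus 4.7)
The plan is to decompose the count of ``linear information'' produced by the algorithm and invert it against the stopping threshold. After $\ell$ guesses, Algorithm~\ref{alg:guess_and_determine} has accumulated (i) the $c$ collision-based linear equations from Step~\ref{step:collisions}, (ii) one linear constraint $x_i = b_i$ per determined bit (either a guess or a freely propagated variable), and (iii) one linear equation for every originally quadratic equation that has lost at least one variable from its quadratic monomial. The stopping condition of the algorithm is that (i)+(ii)+(iii) reaches $n$, so if I can compute the expected per-guess contribution $\gamma$ to (ii)+(iii), then solving $c + \ell \gamma \gtrsim n$ will produce \eqref{eq:theoretical}.

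The observation that drives the factor-of-two speed-up over \cite{couteau2018concrete} is that the greedy rule of Algorithm~\ref{alg:guess} always targets a Class~III equation whenever one is available, and the case enumeration of Section~\ref{subsec:equation_classes} shows that each such guess reveals, on average, one additional variable for free. So per guess the algorithm typically determines not one but two secret bits: the guessed bit $v$ itself and a free bit $v'$ produced by the Class~III rule. This yields a contribution of $2$ to (ii) per guess.

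Next I would count the contribution to (iii). Each of the $m = n^s$ equations has exactly two quadratic positions, so across all equations there are $2m$ quadratic-position slots distributed uniformly over $n$ bits, and hence a given bit appears in a quadratic monomial of $2m/n = 2n^{s-1}$ equations in expectation. Substituting either $v$ or $v'$ immediately collapses those equations to linear form, so the two newly determined bits together add an expected $2 \cdot 2n^{s-1} = 4n^{s-1}$ to (iii). Adding the contribution $2$ from (ii) gives $\gamma = 4n^{s-1} + 2$, and rearranging $c + \ell \gamma \geq n$ with a conventional $+1$ slack to absorb the partial overshoot of the last guess and a ceiling to handle the discrete count yields exactly the formula in the proposition.

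The main obstacle is the ``one free bit per guess'' claim underlying the $2$-contribution to (ii). For a linear Class~III equation $x_a + x_b = y$ the deduction is deterministic, but for a quadratic Class~III equation $x_a + x_b x_c = y$ the number of free bits produced depends on the guessed value and the current system state, and the supply of Class~III equations throughout the process is itself a random quantity whose expectation evolves with the number of already determined bits. A rigorous justification would track the expected sizes of Classes~I, II, and~III as deterministic functions of the number of determined bits via a mean-field rate-equation analysis, and show that the greedy rule of Algorithm~\ref{alg:guess} keeps the Class~III pool non-empty until the stopping threshold is reached. I would then corroborate this heuristic rate with the experimental measurements presented later in the section.
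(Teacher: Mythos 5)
Your proposal follows essentially the same route as the paper: the per-guess yield is decomposed into two newly determined bits plus the $2n^{s-1}$ quadratic occurrences of each, giving the rate $4n^{s-1}+2$, and the threshold $n-c$ is divided by this rate — this is exactly the paper's accounting, including the caveat (which the paper also leaves at the heuristic level, backed by case analysis and experiment) that Class~III must stay populated and supply roughly one free variable per guess.

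The one concrete discrepancy is the ``$+1$''. Your derivation $c + \ell\gamma \geq n$ yields $\ell \geq (n-c)/(4n^{s-1}+2)$, and a ceiling already absorbs any overshoot of the last guess, so appealing to ``conventional slack'' does not produce the extra additive $1$ inside the ceiling of \eqref{eq:theoretical}. In the paper the $+1$ is not slack but a start-up offset: the first guess yields only $2n^{s-1}$ linearized equations and no free variable, the second likewise, and free variables only appear from the third guess onward, so after $k$ guesses the total is $4(k-1)n^{s-1} + 2k-2 + c = (k-1)(4n^{s-1}+2)+c$; equating this to $n$ gives $\ell - 1 = (n-c)/(4n^{s-1}+2)$. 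Your argument needs this first-two-guesses bookkeeping (or an equivalent boundary correction) to land on the stated formula rather than one guess fewer.
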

%\ell \leq \frac{8n^{s-1} + n -c + 4}{4n^{s-1} + 2}.
\begin{proof}
First we derive the average ``free'' variables one can get by guessing one variable. 

The average occurrence of a variable in the initial system of equations can be calculated as $5n^{s-1}$ (since we have $n^s$ equations and thus $5n^{s}$ occurrences of variables, we can get the average occurrence of one variable is ${5n^{s}}/{n}$), and further divided into $2n^{s-1}$ and $3n^{s-1}$ in the quadratic and linear terms, respectively. 

For the first variable to guess (we assume that Class II and Class III are empty when guessing the first variable and if they are not, we have even more advantages), if it appears as a linear term in an equation, this equation would be reduced to a quadratic equation in Class II (no matter the variable is guessed to be 1 or 0). The average number of such equations would be $3n^{s-1}$. If the variable appears in the quadratic term of an equation, the equation would be reduced to a linear equation either in Class I (when the guessed value is 1) or in Class II (when the guessed value is 0). The average number of such equations would be $2n^{s-1}$. Thus after the first guess, we have on average $3n^{s-1}$ quadratic equations in Class II and $2n^{s-1}$ linear equations either in Class I or Class II. 

Similarly, for the second guess from equations in Class II, roughly another $2n^{s-1}$ linear equations would be derived. Thus, after the second guess, we have around $4n^{s-1}$ linear equations. At the same time, some equations in Class II would be reduced to equations belonging to Class III. From the third guess, it is highly likely that there are always equations in Class III and we can keep guessing variables appearing in this class to explore ``free'' variables. There are three different cases in terms of choosing a variable and obtaining ``free'' variables:
\begin{itemize}
	\item[(1)] {\bf Guessing a variable in a linear equation.} For a linear equation $x_{\sigma_1^i} + x_{\sigma_2^i} = y$, if one variable, say $x_{\sigma_1^i}$ is guessed, the other variable could be freely determined, i.e., $x_{\sigma_2^i} = y + x_{\sigma_1^i}$. By plugging in $x_{\sigma_2^i}$ and performing equation reduction, it might happen that more variables can be further determined. Thus, at least one ``free'' variable would be obtained in this case. If we plug in the guessed and ``free'' variables, we would  get roughly $4n^{s-1}$ linear equations on average for every guess.
	\smallskip
	\item[(2)] {\bf Guessing the linear variable in a quadratic equation.} For a quadratic equation $x_{\sigma_1^i} + x_{\sigma_2^i}x_{\sigma_3^i} = y_i$, if $x_{\sigma_1^i}$ is guessed as $x_{\sigma_1^i} = y_i \oplus 1$, we could get two ``free'' variables, i.e., $ x_{\sigma_2^i} =1, x_{\sigma_3^i} =1$; while if it is guessed as $y_i$, no ``free'' variables could be obtained from the quadratic equation $x_{\sigma_2^i}x_{\sigma_3^i} = 0$ for the moment. If we assume the variable is guessed as 1 or 0 randomly, one ``free'' variable on average can be obtained by guessing the linear term. However, for the correct path, $x_{\sigma_1^i}$ is more likely to be $y_i$: with probability of 0.75, instead of 0.5, and we would get less than one ``free'' variable.
	\smallskip
	\item[(3)] {\bf Guessing a variable in the quadratic term.} When guessing a variable in the quadratic term in the above quadratic equation, if the variable is guessed to be 0, $x_{\sigma_1^i} = y_i$ could be obtained for free; while if it is guessed to be 1, we could get a linear equation in Class III. If one more variable in this derived linear equation is guessed in some future stage, one ``free'' variable could be achieved. Therefore, by guessing a variable in the quadratic term, we could get $2/3$ free variables on average.	
\end{itemize}

Thus, when guessing a variable in Class III, we set the rule that we always first guess variables from linear equations, to guarantee at least one ``free'' variable for every guess. From the third guess, there exist linear equations in Class III with very high probability. When we have to guess a variable from a quadratic equation when no linear equations exist, we always guess the linear term, such that the average number of ``free'' variables is one (except for the correct path). This enables us to reach the stop condition and trace back as fast as possible for the wrong  paths. Thus, we could get $4n^{s-1}$ linear equations for each guess on average, and after $k$ guesses, we would roughly get $k-2$ ``free'' variables and $4(k-1)n^{s-1}$ linear equations.

%It is specified in \cite{couteau2018concrete} that when the rank of the matrix is smaller than $n$, a smaller $\gamma n \times \gamma n~ (0 < \gamma \leq 1)$ square matrix involving a reduced number of bits can be subtracted. These involved bits can then be recovered by solving this subtracted smaller matrix and further plugged into the original system to recover the other bits. In this case, the guessing can be stopped when the number of linear equations is larger than $n$, i.e., $d + c + k \geq n$, where $d$ denotes the number of linear equations, $k$ denotes the number of guesses, and $c$ denotes the number of collisions. 

Thus, after $k$ guesses, the number of linear equations (including the guessed and freely determined variables) one can obtain is roughly
\begin{equation} \label{eq:number}
	4(k-1)n^{s-1} + c + 2k - 2.
\end{equation}
%where $c$ denotes the number of linear equations derived from finding collisions in the quadratic terms. 
The average value of $c$ has been derived as $m - \binom{n}{2} + \binom{n}{2}((\binom{n}{2} -1)/\binom{n}{2})^m$ in \cite{couteau2018concrete}. We use the same stop condition as that in \cite{couteau2018concrete}, i.e., the number of linear equations  is not smaller than $n$. Suppose that after guessing $\ell$ variables, it will for the first time satisfy $4(\ell-1)n^{s-1} + c +  2l -2  = n$, then we can get $\ell =  \lceil \frac{n -c}{4n^{s-1} + 2} + 1 \rceil$.
%while $4(\ell-2)n^{s-1} + c + 2l -4 < n$, then we can get $\ell \leq \lfloor \frac{n -c}{4n^{s-1} + 2} + 2 \rfloor$.

\end{proof}

\noindent {\bf Storage Complexity.} We need to back up several intermediate systems generated during the guessing process. The number of nodes with back-up is the depth of a guessing path, i.e., the number of guesses $\ell$. Thus the storage complexity is $O(\ell\cdot m)$, which is $O(n^2)$ according to \eqref{eq:theoretical}.\\
\\
\noindent {\bf Time Complexity.} When selecting a variable to guess, we need to go through all the equations in a local class, e.g., Class III, or Class II if Class III is empty, and choose the variable which appears most often. If these variables have the same occurrences in the local classes, we choose the one among them which appears most often in the global system. We at most need to go through all the equations and the complexity is upper bounded by $O(m)$. We could further reduce the complexity by keeping a global list recording the occurrence of each variable and updating it when performing equation reduction. Thus, the global maximum value can be found in complexity $O(n)$.
The total number of selections is $O(2^\ell)$ and thus the complexity for selecting variables is $O(2^\ell n)$, which is $O(n 2^{\frac{n^{2-s}}{4}})$. 

The largest computation overhead lies in solving linear systems, for which the cost is dominated by inverting a matrix. We use the same estimation of time complexity as that in \cite{couteau2018concrete} for inverting a sparse matrix, which is $O(n^2)$. Thus the total time complexity is dominated by $O(2^\ell \cdot n^2)$, i.e., $O(n^2 2^{\frac{n^{2-s}}{4}})$ according to \eqref{eq:theoretical}, which has halved the exponential coefficient of the time complexity $O(n^2 2^{\frac{n^{2-s}}{2}})$ in \cite{couteau2018concrete}.

\begin{lemma}
	The asymptotic complexity of the proposed guess-and-determine attack is
	\begin{equation*}
	O(n^2 2^{\frac{n^{2-s}}{4}}).
	\end{equation*}
\end{lemma}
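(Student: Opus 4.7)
The plan is to combine the two estimates already established in the preceding subsection: the bound on the number of guesses
\[
\ell = \left\lceil \frac{n-c}{4n^{s-1}+2} + 1 \right\rceil
\]
from the Proposition, and the per-leaf cost $O(n^2)$ for sparse Gaussian elimination. Since the algorithm enumerates over a binary tree of depth $\ell$ and solves a linear system at each leaf, the total time is $O(2^{\ell}\cdot n^2)$; the lemma then reduces to showing that $\ell = \frac{n^{2-s}}{4}(1+o(1))$ as $n\to\infty$ with $s\in(1,2)$ fixed.

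First I would simplify the denominator: because $s>1$, the additive constant is negligible and $4n^{s-1}+2\sim 4n^{s-1}$. Next I would bound the numerator by showing that the collision count $c$ satisfies $c = o(n)$ in the relevant regime. Starting from the closed form $c = m - \binom{n}{2} + \binom{n}{2}\big(1-1/\binom{n}{2}\big)^m$ with $m=n^s$, write $x = m/\binom{n}{2}\sim 2n^{s-2}\to 0$, expand $(1-1/\binom{n}{2})^m = e^{-x}(1+o(1))$, and then use $e^{-x}=1-x+x^2/2+O(x^3)$. A direct computation gives
\[
c \sim \binom{n}{2}\cdot \frac{x^2}{2} \sim n^{2s-2},
\]
so $c=o(n)$ whenever $s<3/2$, which covers the parameter range of interest. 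Hence $n-c \sim n$, and therefore $\ell \sim \frac{n}{4n^{s-1}} = \frac{n^{2-s}}{4}$.

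Substituting into $O(2^{\ell}\cdot n^2)$ absorbs the $(1+o(1))$ multiplicative term inside the $O(\cdot)$ and yields $O(n^2\cdot 2^{n^{2-s}/4})$, as claimed. The cost of the auxiliary bookkeeping (maintaining the per-variable occurrence list, performing equation reduction) is polynomial in $n$ per node and is dominated by the linear-algebra step, so it does not affect the asymptotic bound.

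The only genuinely delicate point is the estimate $c = o(n)$, which depends on the regime $s<3/2$; outside that range one would have to keep $c$ explicitly in the expression, although the leading exponent $n^{2-s}/4$ remains unchanged since $c$ only appears inside the linear factor $n-c \le n$ of the numerator and hence can never enlarge $\ell$. Everything else in the argument is routine asymptotic simplification of the formula already proved.
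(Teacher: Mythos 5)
Your proposal is correct and follows essentially the same route as the paper, which simply combines the node count $O(2^{\ell})$ with the $O(n^2)$ cost of inverting a sparse matrix at each leaf and substitutes $\ell$ from the Proposition. You additionally supply the asymptotic justification that the paper leaves implicit — namely that $4n^{s-1}+2\sim 4n^{s-1}$ and $c\sim n^{2s-2}=o(n)$ for $s<3/2$, so $\ell\le n^{2-s}/4+O(1)$ — together with the correct observation that $c$ can only shrink $\ell$, so the $O(\cdot)$ upper bound holds in all regimes.
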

%%% Local Variables:
%%% mode: latex
%%% TeX-master: "../pseudorandom_generator"
%%% End:
 
\subsection{Experimental Verification}
In this section, we first experimentally verify the attack and the theoretical analysis, then break the candidate non-vulnerable parameters suggested in \cite{couteau2018concrete}. 
%We further give some parameters which are resistant against our attack.
%\subsection{Verification of the Attack}
We first use a simpler way with much less complexity to verify the attack and later show that the results derived in this way match well with those for practically recovering the secret. In this simple verification, we test the required number of guesses to collect enough linear equations, i.e., the number of linear equations is not smaller than $n$. We iteratively select a variable using the criterion described above and guess it randomly until the condition is satisfied. We run 1040 instances for each $(n, s)$ pair and get the average number of guesses, whose distribution shows a low variance. The theoretical results computed according to \eqref{eq:theoretical} and experimental results of the average required number of guesses are shown in Table \ref{tab:combined}. The experimental results are better than the theoretical ones, just as happened in \cite{couteau2018concrete}. Compared to the experimental results in \cite{couteau2018concrete}, the proposed attack requires fewer guesses. 
%\ref{tab:simple_theory} shows the theoretical bound of number of guesses according to \ref{eq:theoretical}. One can see that it requires fewer guesses than that in \cite{couteau2018concrete}. The result is actually even much better than the experimental result of \cite{couteau2018concrete}.

\begin{table*}[!ht]
		\renewcommand{\arraystretch}{1.2}
	\centering
		\caption{The Average Number of Guesses Required to Achieve Enough Linear Equations\tnote{}}
	\label{tab:combined}

	%\small
	% \addtolength{\tabcolsep}{-5pt}  
	\begin{threeparttable}
	\begin{tabular}{c|c|c|c|c|c|c|c|c|c|c|c|c|c|c|c}
		\toprule
		n & \multicolumn{3}{c|}{256} & \multicolumn{3}{c|}{512} &  \multicolumn{3}{c|}{1024} & \multicolumn{3}{c|}{2048}  & \multicolumn{3}{c}{4096}\\
		\cline{2-16}
		&theo.& exp.& \cite{couteau2018concrete} & theo.& exp. & \cite{couteau2018concrete} & theo.& exp. & \cite{couteau2018concrete} & theo.& exp. & \cite{couteau2018concrete}& theo.& exp. & \cite{couteau2018concrete} \\
		\hline
		$s = 1.45$ &4& 3.6& 4 & 5&4.9 &6 &7&7.0 & 9 &10& 10.1& 14 &15& 14.9& 21 \\
		\hline
		$s = 1.4$ &6&5.7 & 6 &9& 8.4& 11 & 13&12.7 & 17 & 20 &19.3& 27 & 31& 30.0& 44 \\
		\hline
		$s = 1.3$&11 & 10.4 & 20&18 & 16.7 & 23 &30& 27.1& 39 &49 & 44.3 & 65 &80& 72.6& 110 \\
		\bottomrule	 
		%\cline{2-8}
	\end{tabular}%[5pt]
	\begin{tablenotes}
	\item[]  Columns of theo. and exp. denote the theoretical and experimental results, respectively.  
	\end{tablenotes}
\end{threeparttable}
\end{table*}

We further implement the proposed attack to actually recover the secret as described in Algorithm \ref{alg:guess_and_determine}. The algorithm will not terminate until the guessing values for the guessed variables are correct when enough linear equations are derived. We record the number of guesses for the correct path, the number of nodes we have visited, and the number of times of solving linear systems. Table \ref{tab:key_recovery} shows the results under some $(n, s)$ pairs, each with 40 instances.
%Every time when we guess a variable and perform equation reduction for the guessed and correspondingly freely derived variables,
%we replace all the positions where the guessed variable is involved in an equation with the guessed value, and check if some ``free'' variables can be derived. If so, plug in the freely derived variables and perform equation reduction until there are no more ``free'' variables. After each such guess and equation reduction,
%we check if enough linear equations have been derived, i.e., the number of linear equations is not smaller than $n$. If not, we back up the current system and guess more variables until the condition is satisfied. We then solve the derived linear equation system and check if the recovered secret is correct{\footnote{This can be verified by checking if the recovered secret can produce the exact output sequence.}}. 
%If the recovered secret is wrong, we need to trace back: either 1) reversing the guessed value of the last guessed variable if it has not been reversed before and perform equation reduction based on the last back-up system; or 2) tracing back one step further, which indicates at least one of the previous guessed variables are not correct. 

\begin{table}[!ht]
	\renewcommand{\arraystretch}{1.2}
	\caption{The Results for Practical Key Recovery \tnote{}}
	\label{tab:key_recovery}
	\centering
	\begin{threeparttable}
	\begin{tabular}{c|c|c|c|c}
		\toprule
		$(n, s)$ & \multicolumn{3}{c|}{Ours} & \cite{couteau2018concrete}\\
		\cline{2-5}
		~~~~~~~~~~&~~~~~\#1~~~~~ &~~~~~\#2~~~~~&~~~~~\#3~~~~~&~~~~~~~~~~   \\
		\hline
		$(256, 1.45)$ & 3.7& 7.4&8.1& 4  \\
		\hline
		$(256, 1.4)$ &6.0& 29.0&31.0 & 6  \\
		\hline
		$(256, 1.3)$ & 11.1& 1038.0&1051.0 & 13\\
		\hline
		$(512, 1.45)$ &5.1 & 18.5&20.0 &6 \\
		\hline
		$(512, 1.4)$ &8.8 & 191.3& 195.0& 11 \\
		\hline
		$(512, 1.3)$ & 17.3& 81562.5&81887.3 & 23 \\
		\hline
		$(1024, 1.45)$ & 6.9& 70.9&73.1& 9 \\
		\hline
		$(1024, 1.4)$ & 13.5& 4366.3&4373.0 & 17 \\
		\hline
		$(2048, 1.45)$ & 10.4& 663.9&668.3 & 14\\
		\hline	
		$(4096, 1.45)$& 15.4& 19357.1&19364.7& 21 \\
		\bottomrule	 	 
		%\cline{2-8}
	\end{tabular}
	\begin{tablenotes}
	\item[]  The column \#1 and the last column denote the required number of guesses, columns \#2 and \#3 denote the number of times of solving linear systems, and the number of visited nodes, respectively  
\end{tablenotes}
\end{threeparttable}
\end{table}
One can see that the number of visited nodes is slightly larger than the number of times the algorithm is solving linear equation systems, which makes sense since we always immediately trace back whenever a conflict occurs without going into solving a linear system.  The available results match well with the simplified verification results in Table \ref{tab:combined}, from which we could estimate the required numbers of guesses for larger parameters.

The results are better than those in \cite{couteau2018concrete}, particularly when more guesses are required. For example, when $n = 4096, s = 1.3$, 110 guesses are required in \cite{couteau2018concrete}, while our attack only needs around 73. This is because when more guesses are needed, we perform more guesses in Class III and get more ``free'' variables, thus the advantage from exploiting ``free'' variables is more obvious. \\
\\
If a system is expected to achieve a security level of $r$ bits, the complexity for inverting the system should be larger than $2^r$. We use the same estimation of time complexity\footnote{The accurate estimation formula can be found in the proof of concept implementation of~\cite{couteau2018concrete}. See \url{https://github.com/LuMopY/SecurityGoldreichPRG}.} as that in \cite{couteau2018concrete}, which is \(2^{\ell}\cdot n^{2}\), where \(\ell\) and $n$ denote the number of guesses and seed size, respectively. Actually, there should be some constant factor for \(2^{\ell}\cdot n^{2}\) when deriving the actual complexity from the asymptotic complexity $O(2^{\ell}\cdot n^{2})$. However, for a fair comparison, we take the constant factor as 1 as done in \cite{couteau2018concrete}. Thus, to achieve the $r$-bit security,  $\ell$ should satisfy $2^\ell n^2 > 2^r$, i.e., $\ell > r - 2\log_2 n$. Let us set $\ell = r - 2\log_2 n$ and try to derive the limit of stretch $s$ above which the parameters are susceptible to our attack. The unsecure stretch $s$ would satisfy the condition below according to \eqref{eq:number}:
%The stretch $s$ should for the first time satisfy the requirement below according to \ref{eq:theoretical} when $s$ varies from high values to lower:
\begin{equation} \label{eq:security_level}
4(r - 2\log_2 n-1)n^{s-1} + c + 2(r - 2\log_2 n) -2  \geq n.
%4(r - 2\log_2(n)-1)n^{s-1} + n^s - \binom{n}{2} +  \binom{n}{2}(\frac{(\binom{n}{2} -1)}{\binom{n}{2}})^{n^s} + 2(r - 2\log_2(n)) -2  > n.
\end{equation}
We derive the theoretical limits for $s$ under 80-bit and 128-bit security levels for different seed sizes $n$ using \eqref{eq:security_level}. We can use the same way to derive the theoretical limits for the guess-and-determine attack in \cite{couteau2018concrete}, which were not given there. It is proved in \cite{couteau2018concrete} that the number of linear equations for guessing one variable is larger than $2\frac{m}{n}$, thus after guessing $\ell$ variables, the number of linear equations is larger than $2\ell\frac{m}{n} + c + \ell$. The unsecure stretch $s$ in \cite{couteau2018concrete} would satisfy the condition below:
\begin{equation} \label{eq:security_level_2018}
	2(r - 2\log_2 n )n^{s-1} + c + r - 2\log_2 n  \geq n.
	%4(r - 2\log_2(n)-1)n^{s-1} + n^s - \binom{n}{2} +  \binom{n}{2}(\frac{(\binom{n}{2} -1)}{\binom{n}{2}})^{n^s} + 2(r - 2\log_2(n)) -2  > n.
\end{equation}
Note that the stretch limits in \cite{couteau2018concrete} would be looser than ours, as the theoretical analysis there is based on the worst case, while we consider the average case. 
  
We also get experimental limits by running extensive instances. For a given seed size $n$, we start with a high enough $s$ value and decrease it by a 0.001 interval each time. For each $(n, s)$ pair, we implement 400 independent instances\footnote{We tested a relatively smaller number of instances here, since the cases for $n = 8192$ require much more computation overhead. We found in the experiment that the results averaged on 400 instances and 1000 instances are almost the same, due to the small variances.}, and check whether the complexity is larger than $2^{80}$ or $2^{128}$ under our guess-and-determine attack. We have shown in Table \ref{tab:combined} and Table \ref{tab:key_recovery} that the required numbers of guesses to obtain enough linear equations match well with those to actually recover the secret, and we use the former to represent the latter, since actually solving the linear equation systems consumes large computation overhead. 
\begin{figure}[!ht]
	\centering
	\includegraphics[width=0.5\textwidth]{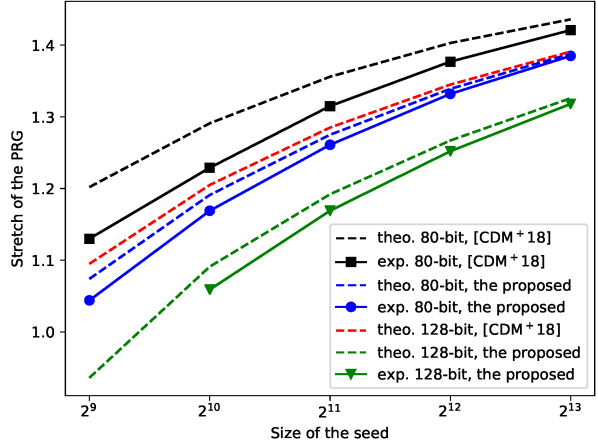}
	\caption{The limits of stretchs for vulnerable instances (theo. and exp. denote the theoretical and experimental results, respectively). The zones above the lines denote the insecure choices of parameters}
	\label{fig:security_level}
\end{figure}
\begin{table}[!t]
	\renewcommand{\arraystretch}{1.2}
	\centering
	\caption{Complexity for Solving the Challenge Parameters Proposed in~\cite{couteau2018concrete}}
	\label{tab:comparison}
	%\small
	% \addtolength{\tabcolsep}{-5pt}  
	\begin{tabular}{c|c|r|r}
		\toprule
		Security Level & (\(n, s\)) & New &  \cite{couteau2018concrete} \\ 
		\hline
		\addlinespace[1ex]
		80 bits & (\(512, 1.120\)) & \(2^{61}\) & \(2^{91}\) \\
		& (\(1024, 1.215\)) & \(2^{66}\) & \(2^{90}\) \\
		& (\(2048, 1.296\)) & \(2^{68}\) & \(2^{91}\) \\
		& (\(4096, 1.361\)) & \(2^{68}\) & \(2^{91}\) \\
		\hline
		\addlinespace[1ex]
		128 bits & (\(512, 1.048\)) & \(2^{79}\) & \(2^{140}\) \\
		& (\(1024, 1.135\)) & \(2^{93}\) & \(2^{140}\) \\
		& (\(2048, 1.222\)) & \(2^{98}\) & \(2^{139}\) \\
		& (\(4096, 1.295\)) & \(2^{100}\) & \(2^{140}\) \\
		\bottomrule		 
		%\cline{2-8}
	\end{tabular}
\end{table}

Fig. \ref{fig:security_level} shows the stretch limits for different seed sizes under the 80-bit and 128-bit security levels. The dashed lines denote the theoretical limits for our attack and the attack in \cite{couteau2018concrete}, which are computed according to  \eqref{eq:security_level} and \eqref{eq:security_level_2018}, respectively. The solid lines denote the experimental results (the experimental stretch limit for 128-bit security is not given in \cite{couteau2018concrete}).  The zone above the lines represent the insecure choices of $(n,s)$ parameters. From the results, one can easily see that the stretch limits under our attack are stricter than the ones in \cite{couteau2018concrete}. Thus, some systems which are secure under the attack in \cite{couteau2018concrete} cannot resist against our attack given a security level. Particularly, our stretch limits for 80-bit security are even stricter than the theoretical limit for 128-bit in \cite{couteau2018concrete}, thought the comparison to the experimental limit is unclear. We will later show that we can attack some parameter set suggested for 128-bit security in \cite{couteau2018concrete} with complexity lower than $2^{80}$. Besides, our theoretical and experimental limits have smaller gaps than the ones in \cite{couteau2018concrete}, especially when $n$ goes larger, which indicate that with our theoretical analysis, one can have better predication of the security for a system when the parameters get larger. \\
%Besides, the results show that some $s$ choices under which the systems are secure under the attack in \cite{couteau2018concrete} cannot resist against our attack. For example, when $n = 4096$, the results in \cite{couteau2018concrete} show that stretches smaller than about 1.37 could be safe to achieve the 80-bit security level, while our results show that they should be smaller than 1.334. \\
\\
{\bf Breaking Challenge Parameters.} The authors in~\cite{couteau2018concrete} suggested some challenge parameters for achieving 80 bits and 128 bits of security.  Table~\ref{tab:comparison} compares the time complexities for attaching these parameters. For a fair comparison, we again use the same estimation as that in \cite{couteau2018concrete} where the time complexity is estimated as \(2^{\ell}\cdot n^{2}\), with \(\ell\) being the required number of guesses.
Also, note that the authors of~\cite{couteau2018concrete} took a margin of \(10\%\) when selecting these security parameters. For each $(n, s)$ parameter, we implemented 1040 instances.

We see that all the proposed challenge parameters fail to achieve the claimed security levels and the improvement factor becomes larger when \(n\) is smaller. For \((n,s) = (512, 1.048)\), the time complexity of the improved algorithm is smaller by a factor of about \(2^{61}\); this parameter set is even insufficient for providing \(80\) bits of security, though it is originally suggested for \(128\) bits.

%%% Local Variables:
%%% mode: latex
%%% TeX-master: "../pseudorandom_generator"
%%% End:
 
%\input{parts/variant}  
\section{Guess-and-Decode: A New Iterative Decoding Approach for Cryptanalysis on Goldreich's PRGs}
\label{sec:guess_decode}
%We combine the guessing strategies described in \ref{sec:method} and the iterative decoding to give a new idea of cryptanalysis of local PRGs instantiated on the $P_5$ predicate, which we call as the guess-and-decode attack.

In this section, we present a new attack on Goldreich's PRGs with even lower complexities. We combine the guessing strategies described in Section \ref{sec:method} and iterative decoding to invert a quadratic system and recover the secret, which we call guess-and-decode attack. Specifically, we first guess some variables using similar strategies described in our guess-and-determine attack; then instead of inverting a linear system, we apply probabilistic iterative decoding on the derived quadratic system to recover the secret, for which the complexity is smaller. Besides this gain in reducing the complexity, we also expect that fewer guesses are required before a system can be correctly inverted. 

%Different from the guess-and-determine attack in Section \ref{sec:method} where a secret can always be recovered, the guess-and-decode attack succeeds with a certain probability.

We first give a recap on the classical iterative decoding showing how it works on linear checks; then describe how we modify it to invert a quadratic system. We experimentally verify the attack, showing that with soft decoding we can further reduce the complexities for attacking the challenge parameters given in \cite{couteau2018concrete}, and finally suggest some new challenge parameters which appear to resist against our attacks for further investigation.
\subsection{Recap on Iterative Decoding} \label{subsec:iter_recap}
Iterative decoding has been commonly used in information theory, e.g., most notably used for decoding LDPC (low-density parity-check) codes. The basic idea of iterative decoding is to break up the decoding problem into a sequence of iterations of information exchange, and after some iterations the system is expected to converge and give a result (or the process is halted).  It can provide sub-optimal performance with a much reduced complexity compared to a maximum likelihood decoder. Below we give a short introduction to LDPC codes and describe how iterative decoding works. For more details about iterative decoding, we refer to \cite{ryan2009channel,hagenauer1996iterative}. 

An $(n, k)$ LDPC code, where $k$ and $n$ respectively denote the lengths of the information block and the codeword, is a linear block code which is usually defined as the null space of a {\it parity-check matrix} $\mathbf{H}$ of size $(n-k) \times n$ whose entries are either 1 or 0. For every valid codeword $\mathbf{v}$, $\mathbf{vH}^T = 0$ is always satisfied. Each row of $\mathbf{H}$ denotes one check and the number of rows indicates the number of checks every codeword should satisfy, while each column denotes one code symbol. If the code symbol $j$ is involved in a check $i$, the entry $(i,j)$ of $\mathbf{H}$ would be 1; otherwise 0. Usually, the density of 1's in $\mathbf{H}$ should be sufficiently low to allow for iterative decoding, thus getting the name LDPC.
% $r = n-k$
%The row weight, the number of 1's in each row, indicates the number of information bits involved in this check; while the column weight denotes the number of checks a specific variable is involved in.

A {\it Tanner graph} is usually used to represent a code and help to describe iterative decoding. It is a bipartite graph with one group of nodes being the variable nodes (VNs), i.e., the code symbols, and the other group being the check nodes (CNs). If the variable $j$ is involved in the check $i$, an edge between CN $i$ and VN $j$ is established. The VNs and CNs exchange information along the edges, and the process is referred to as {\it message passing}. Each node (either a variable node or a check node) acts as a local computing processor, having access only to the messages over the edges connected to it. Based on the incoming messages over the connected edges, a node would compute new messages and send them out. In some algorithms, e.g., {\it bit-flipping} decoding, the exchanged messages are binary (hard) values while in others, such as {\it belief-propagation} or {\it sum-product} decoding, the messages are soft probabilities, which represent a level of belief about the values of the code symbols. The probabilities can be implemented in the logarithm domain, i.e., log-likelihood ratios (LLRs), which makes the iterative decoding more stable. We use the sum-product decoding based on LLRs in our attack. 

The sum-product algorithm accepts {\it a-priori} probabilities for the information symbols, which are usually received from the channel and known in advance before the decoding, and outputs {\it a posteriori} values for all symbols after some iterations of message passing. Below we give the general steps of the sum-product algorithm.\\
\\
%In each iteration, each CN receives messages, e.g., LLR values, from its neighboring VNs and computes outgoing messages for these VNs; after that each VN takes the a-priori and channel messages, and messages from its neighboring CNs as inputs and computes  outgoing messages for these CNs. By iteratively continuing this process until the system has converged or reached the maximum number of iterations, a codeword would be recovered or the recovery fails. 
%We next present a general message passing algorithm {\it sum-product algorithm} (SPA), also known as {\it belief propagation}, which provides near-optimal performance.\\
\noindent{\bf Step 1 Initialization.} For every variable $v$, initialize its LLR value according to its a-priori LLR value $L_a(v)$, i.e., $L_v^{(0)} = L_a(v)$. For every edge connected to $v$, initialize the conveyed LLR value as $L_v^{(0)}$.

After the initialization, the algorithm starts iterations for belief propagation, in which {\bf Step 2 - Step 4} below are iteratively performed until the recovered secret is correct or reaching the maximum allowed number of iterations. In each iteration, every node (either variable node or check node) updates the outgoing LLR value over every edge based on the incoming LLR values over all the {\it other} edges, which are called  ``extrinsic''  information; while ignoring the incoming LLR value over this specific edge, which is called ``intrinsic'' information. It is always the ``extrinsic'' information that should be only used to update the LLR values.\\
\\
%After the initialization, It then goes into message passing: in each iteration, the check nodes and variable nodes respectively update, and after that the distributions of the variables are updated and an intermediate secret could be recovered and verified.
%, receiving messages from the variable nodes and from these compute the outgoing message over each edge; and then the variable nodes update, receiving messages from the checks nodes and based on these and the a-priori information compute the outgoing message over each edge. 
\noindent{\bf Step 2 Check node update.} In each iteration $i$, every check node $c$ computes an outgoing LLR value over each of its edges $e_k$, based on the incoming LLR values updated in the $(i-1)$-th iteration from every {\it other} edge $e_k'$ connected to $c$, as below:
\begin{align} \label{eq:check_update}
	L_c^{(i)}(e_k) &= \underset{k' \neq k}{\boxplus} L_v^{(i-1)}(e_{k'})\nonumber \\ &=2\tanh^{-1}\left(\prod_{k' \neq k} \tanh \left(1/2 L_v^{(i-1)}(e_k')\right)\right),
\end{align}
where the "box-plus" operator $\boxplus$ is used to denote the computation of the LLR value of the xor-sum of two variables, i.e., for $a = a_1 \oplus a_2$, $L(a) = L_{a_1} \boxplus L_{a_1} = \log (\frac{1+e^{L_{a_1} + L_{a_2}}}{e^{L_{a_1}} + e^{L_{a_2}}})$.\\
\\
\noindent{\bf Step 3 Variable node update.} In each iteration $i$, every variable node $v$ computes an outgoing LLR value over each of its edges $e_j$, based on the a-priori information, and LLR values updated in the $i$-th iteration from every {\it other} edge $e_j'$ connected to $v$, as below:
\begin{align}\label{eq:variable_update}
	L_v^{(i)}(e_j) =  L_{a}(v)  + \underset{j' \neq j}{\sum} L_c^{(i)}(e_{j'}).
\end{align}

\noindent{\bf Step 4 Distribution update.} After each iteration, update the LLR value of every variable $v$ using \eqref{eq:variable_update}, but with every edge included, i.e.,
\begin{align}\label{eq:distribution_update}
	L_v^{(i)} =  L_{a}(v)  + \underset{j \in N(v)}{\sum} L_c^{(i)}(e_{j}),
\end{align}  
where $N(v)$ is the set of edges connected to $v$. Set
\begin{align} \label{eq:verify}
	\hat{v} = 
	\begin{cases}
		1 & \text{if}~ L_v^{(i)} < 0,\\
		0& \text{otherwise},	
	\end{cases}
\end{align}
to recover an intermediate value for every variable $v$. The algorithm immediately stops whenever  $\mathbf{\hat{v}H^T = 0}$ is satisfied or the number of iterations reaches the maximum limit; otherwise, it continues with a new iteration.

\subsection{Algorithm for the Guess-and-Decode Attack}
In this subsection, we show how we combine guessing and iterative decoding to invert a Goldreich's PRG and recover the secret. We modify the classical iterative decoding to accommodate our use case, and the differences are listed below:
\begin{enumerate}
	\item The most important difference lies in that the system in classical iterative decoding is linear, while quadratic in our application. We have designed special belief propagation techniques for quadratic equations.
	\item In classical iterative decoding, a-priori information, either from a non-uniform source or from the channel output, is required and plays an important role for the convergence of the belief propagation.  However, in our case, there is no available a-priori information and all the variables are assumed to be uniformly random distributed. The system is expected to achieve self-convergence.	
	\item In classical iterative decoding, the check values are always zero; while in our case, a check value could be one, and special belief propagation techniques are designed for it. 
\end{enumerate}
Algorithm \ref{alg:iterative_decoding} shows the general process of the guess-and-decode attack. It basically consists of two phases: guessing phase, during which  guessing strategies similar to what have been described in Section \ref{sec:method} are applied to guess and derive ``free'' variables; and decoding phase, during which the modified iterative decoding is applied on the resulting quadratic system to recover the remaining secret bits. 

\begin{algorithm}[!ht] 
	\caption{The guess-and-decode attack} 
	\textbf{Input} 
	A Goldreich's PRG system instantiated on $P_5$, the maximum allowed number of iterations $iterMax$
	\\	
	\textbf{Output} 
	A recovered secret $\hat{x}$
	\begin{algorithmic}[1] \label{alg:iterative_decoding}
		\STATE Guess some variables using strategies similar to what have been described in Section \ref{sec:method}
		\IF{ the secret can already be correctly recovered}
		\STATE return the recovered secret
		\ELSE	
		\STATE Build an iterative decoding model for the resulting system of equations, initialization, set $it = 0$	
		\WHILE{$it < iterMax$}
		\STATE $it = it + 1$
		\STATE Update check nodes
		\STATE Update variable nodes
		\STATE Update the distributions of variables and get an intermediate  recovered secret $\hat{x}$, return $\hat{x}$ if it is correct.
		\ENDWHILE
		\ENDIF
		%\EndProcedure
	\end{algorithmic}
\end{algorithm} 

\subsubsection{Guessing Phase}
The guessing process generally follows the strategies in the guess-and-determine attack described in Section \ref{sec:method}, but with some modifications. Specifically, in Class III, we only guess variables from the linear equations, and if there are no linear equations, we guess a variable in equations in Class II (or further Class I if Class II is empty), instead of guessing a variable in a quadratic equation in Class III. This is because we want to keep the quadratic equations and get biased information for some  involved variables. For example, for an equation $x_{\sigma_1} + x_{\sigma_2}x_{\sigma_3} = y$, after the first iteration of belief propagation, we would get a biased distribution of $x_{\sigma_1}$, i.e., $P(x_{\sigma_1} = 1) = 0.25, P(x_{\sigma_1} = 0) = 0.75$. Such biased information could propagate through the graph and help to make the system converge. This is how the system achieves self-convergence without any a-priori information.  

It should be mentioned here that after guessing a relatively large number of variables, it could happen that all the remaining variables can be directly recovered without going into iterative decoding. This happens in two cases:
\begin{itemize}
	\item[(1)] after guessing enough variables, all the other variables could be determined for free, which happens with a non-negligible probability;
	\item[(2)] besides the guessed and freely determined variables, every remaining variable is involved as the linear term in an quadratic equation in Class III and can be recovered correctly with high probability. For example, for an equation $x_{\sigma_1} + x_{\sigma_2}x_{\sigma_3} = y$, we can recover $x_{\sigma_1}$ as $x_{\sigma_1} = y$. This happens more often when more variables are guessed. The experimental results would verify this later. 
\end{itemize} 

Thus, after the guessing phase, we always check if all the variables can already be correctly recovered; and if so, the iterative decoding can be omitted.
%When the guessing phase is done, we first check if the secret could already be fully recovered correctly before going into iterations of belief propagation.

\subsubsection{Decoding Phase}
When the guessing phase is done, we build an iterative decoding model for the derived system of equations and start the belief propagation. \\
\\
{\bf Iterative decoding model.} The remaining unknown secret variables (neither guessed nor freely determined) are modeled as the variable nodes, while all the remaining valid equations, either quadratic or linear, serve as the checks. Recall that we could have six different forms of equations in the derived system which we have categorized into three classes in Section \ref{subsec:equation_classes}, along with one more form of those which only have a quadratic term, e.g., $x_{\sigma_1^i}x_{\sigma_2^i} = 0$. 
%We go through every equation and deal with it differently as described. We would show how to get and update the a-priori information later. For each check, the degree is the number of linear terms, which could be two, three or four. 
A check node and the variable nodes which are involved in this given check would be connected through edges in the Tanner graph. We define two different types of edges: {\it type 1} edge, which connects a check node and a variable node that is involved as a linear term of this check; and {\it type 2} edge, which connects a check and a variable that is involved in the quadratic term of this check. For these two different types of edges, different belief propagation techniques are applied, which we would show below. Besides, the check value of a check could be either 1 or 0, which does not happen in the classical iterative decoding, and we would describe how we deal with it in a moment. 

\begin{figure*}[!ht]
	\centering
	\includegraphics[width=0.80\textwidth]{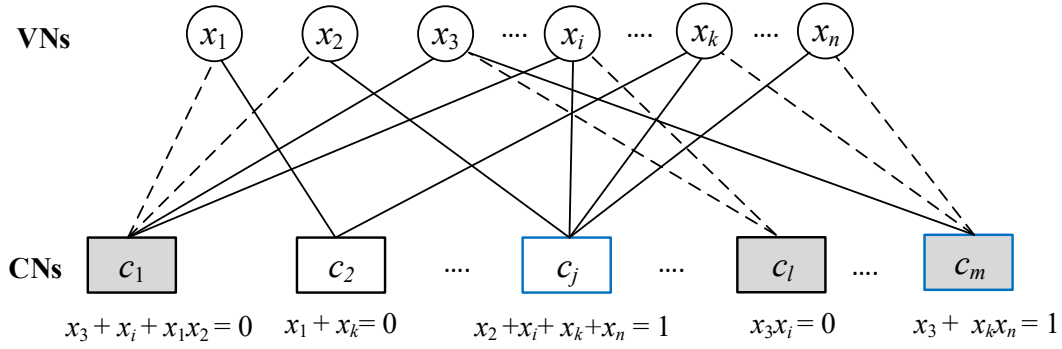}
	\caption{Illustration of an example of the iterative decoding model}
	\label{fig:model}
\end{figure*}
Fig. \ref{fig:model} shows a simple illustration of our iterative decoding model, where circles denote the variable nodes while rectangles denote the check nodes. The gray-filled checks have quadratic terms where the solid lines denote {\it type 1} edges while the dashed lines denote {\it type 2} edges. The blue-bordered checks are those with check values being one.
The decoding phase generally follows the routine of classical iterative decoding described in Section \ref{subsec:iter_recap}, but we have some novel modifications for, e.g., dealing with checks with quadratic terms or with check values being one. We next describe these details in each step.\\
\\
\noindent{\bf Step 1 Initialization.} Since there is no a-priori information for the variables, the LLR values of all the variables are initialized to be zero. All the outgoing messages, i.e., LLR values, over the edges of variables or checks are set to be zero as well. 

After the initialization, the algorithm starts iterations for belief propagation. The update of variable nodes follows the classical iterative decoding, while for updating the check nodes, we have special belief propagation techniques, which are described in details below.\\
\\
\noindent{\bf Step 2 Check node update.} The update of a linear check just follows the classical way, with a small modification when the check value is one. We below give a simple example to show how it affects the belief propagation.\\
\\
\noindent {\it Example.} Given a weight-2 check $x_{\sigma_1} + x_{\sigma_2}  = y$, assume the incoming LLR value in a certain iteration for $x_{\sigma_2}$ is $L_{x_{\sigma_2}}$.%, indicating the probability of $x_{1}$ being zero $p_{1}^0 = e^{L_{1}} / (1 + e^{L_{1}})$.
\begin{itemize}
	\item[--]If $y = 0$, the combinations of ($x_{\sigma_1}, x_{\sigma_2}$) to validate the check is $(0, 0), (1, 1)$. Thus, 
	\begin{align*}
		&P(x_{\sigma_1} = 0) \\
		&= P(x_{\sigma_1} = 0, x_{\sigma_2} = 0) + P(x_{\sigma_1} = 0, x_{\sigma_2} = 1) \\
		&=  P(x_{\sigma_1} = 0| x_{\sigma_2} = 0) P(x_{\sigma_2} = 0) + 0\\
		& = P(x_{\sigma_2} = 0).
	\end{align*}
	Similarly, we can get $P(x_{\sigma_1} = 1) = P(x_{\sigma_2} = 1)$. Thus, the outgoing LLR value of $x_{\sigma_1}$ value is computed as $L_{x_{\sigma_1}} = \log \frac{P(x_{\sigma_2} = 0)}{P(x_{\sigma_2} = 1)} = L_{x_{\sigma_2}}$.
	\item[--]While if $y = 1$, the combinations of ($x_{\sigma_1}, x_{\sigma_2}$) to validate the check is $(0, 1), (1, 0)$. Similarly we would get $P(x_{\sigma_1} = 0) =  P(x_{\sigma_2} = 1)$, $P(x_{\sigma_1} = 1) =  P(x_{\sigma_2} = 0)$. The outgoing LLR value of $x_{\sigma_1}$ is then computed as $L_{x_{\sigma_1}} = \log \frac{P(x_{\sigma_2} = 1)}{P(x_{\sigma_2} = 0)} = -L_{x_{\sigma_2}}$. 
\end{itemize}

Thus, for a linear check, when the check value is zero, the outgoing LLR values are computed with the standard way in \eqref{eq:check_update}; while when it is one, the negative versions of values computed using \eqref{eq:check_update} are sent. \\
\\
When updating the quadratic check nodes, different belief propagation techniques are applied for {\it type 1} and {\it type 2} edges. The update for {\it type 1} edges follows the classical way using \eqref{eq:check_update}, while including the equivalent incoming LLR value from the quadratic term as well. For {\it type 2} edges, we should deal with them carefully by distinguishing ``intrinsic'' and ``extrinsic'' information. Below we show how to update a quadratic check node.

For a quadratic check, e.g., $x_{\sigma_1} + x_{\sigma_2} + x_{\sigma_3} + x_{\sigma_4}x_{\sigma_5}  = y$, assume the incoming LLR values over the edges are $L_{x_{\sigma_1}}, L_{x_{\sigma_2}}, L_{x_{\sigma_3}}, L_{x_{\sigma_4}}, L_{x_{\sigma_5}}$, respectively, and we want to compute the outgoing LLR value over each edge. We denote the linear part and quadratic part as $x_l$ and $x_q$, respectively, i.e., $x_l = x_{\sigma_1} + x_{\sigma_2} + x_{\sigma_3}$, $x_q = x_{\sigma_4}x_{\sigma_5}$. We could compute the equivalent LLR values for $x_l$ and $x_q$, denoted as $L_{x_l}, L_{x_q}$, respectively.  For $x_l$, we could easily get
 $L_{x_l} = L_{x_{\sigma_1}} \boxplus L_{x_{\sigma_2}} \boxplus L_{x_{\sigma_3}}$. While for $x_q$, we fist get
\begin{align}
	p_{x_q}^1& = p_{x_{\sigma_4}}^1 p_{x_{\sigma_5}}^1 = \frac{1}{(e^{L_{x_{\sigma_4}}} + 1)(e^{L_{x_{\sigma_5}}} + 1)},\nonumber \\
	p_{x_q}^0 &=  1 - p_{x_{\sigma_4}}^1 p_{x_{\sigma_5}}^1 = \frac{(e^{L_{x_{\sigma_4}}} + 1)(e^{L_{x_{\sigma_5}}} + 1) - 1}{(e^{L_{x_{\sigma_4}}} + 1)(e^{L_{x_{\sigma_5}}} + 1)}.
\end{align}
Thus the equivalent incoming LLR value for $x_q$ can be computed as $L_{x_q} = \log \frac{p_{x_q}^0}{p_{x_q}^1} = \log ((e^{L_{x_{\sigma_4}}} + 1)( e^{L_{x_{\sigma_5}}} + 1) - 1)$. If $y = 0$, the outgoing LLR value of a linear variable, say $x_{\sigma_1}$, can be computed as:
\begin{align}
	L_{x_{\sigma_1}} = L_{x_{\sigma_2}} \boxplus L_{x_{\sigma_3}} \boxplus L_{x_h}. 
\end{align} 
While if $y=1$, the negative version, i.e., $-L_{x_{\sigma_1}} $ should be sent, just like the update of a linear check when its check value is one. The outgoing LLR values for $x_{\sigma_2}, x_{\sigma_3}$ can be computed in the same way. 

Next we show how to compute the outgoing LLR values for the variables in the quadratic term, e.g., $ x_{\sigma_5}$.  For combinations of $(x_l, x_{\sigma_4}, x_{\sigma_5})$, when $y = 0$, the possible values to validate the check are $(0,0,0)$, $(0,0,1)$, $(0,1,0)$, $(1,1,1)$. Thus the updated LLR value for $x_{\sigma_5}$ is 
\begin{align}
	L_{x_{\sigma_5}} &= \log \frac{p_{x_l}^0 p_{x_{\sigma_4}}^0 p_{x_{\sigma_5}}^0 + p_{x_l}^0  p_{x_{\sigma_4}}^1 p_{x_{\sigma_5}}^0 }{p_{x_l}^0 p_{x_{\sigma_4}}^0 p_{x_{\sigma_5}}^1 + p_{x_l}^1 p_{x_{\sigma_4}}^1 p_{x_{\sigma_5}}^1}\\
	&=\log \frac{p_{x_{\sigma_5}}^0}{p_{x_{\sigma_5}}^1} + \log \frac{p_{x_l}^0 }{p_{x_l}^0p_{x_{\sigma_4}}^0 + p_{x_l}^1p_{x_{\sigma_4}}^1}. \label{eq:extrinsic}
\end{align}
In \eqref{eq:extrinsic}, the first term is regarded as the ``intrinsic'' information while the second term is the ``extrinsic'' information which should be propagated. Thus the outgoing LLR value for $x_{\sigma_5}$ is 
\begin{align} \label{eq:quad_out}
	L_{x_{\sigma_5}} &= \log \frac{p_{x_l}^0 }{p_{x_l}^0p_{x_{\sigma_4}}^0 + p_{x_l}^1p_{x_{\sigma_4}}^1}.
\end{align}
While if $y = 1$,  the possible values of $(x_l, x_{\sigma_4}, x_{\sigma_5})$ to validate the check are $(1,0,0)$, $(1,0,1)$, $(1,1,0)$, $(0,1,1)$. $L_{x_{\sigma_5}}$ can still be computed using \eqref{eq:quad_out}, but with the values of $p_{x_l}^0$ and $p_{x_l}^1$ being exchanged. We could understand it as that $L_{x_l}$ now has the negative version. The outgoing LLR value for $x_{\sigma_4}$ can be derived in the same way. 
%We could understand the technique in another way: we consider all the possible combinations of $(x_1, x_2, x_3,x_4)$ to validate the check and derive the probabilities of $x_4$. By divide the ``intrinsic'' and "extrinsic" information, we could get the outgoing LLR value of $x_4$, which is exactly the same as \ref{eq:quad_out}.

The update techniques apply to all types of checks: a linear check and a check with only a quadratic term can be regarded as special cases without $x_q$ and $x_l$, respectively. For the latter, e.g., $x_{\sigma_1} x_{\sigma_2} = 0$, the possible combinations of $(x_{\sigma_1}, x_{\sigma_2})$ to validate the check is $(0,0),(0,1),(1,0)$, thus the LLR value of $x_{\sigma_1}$ can be updated as $L_{x_{\sigma_1}}  = \log (1/{p_{x_{\sigma_2}}^0}) = \log (1 + e^{-L_{x_{\sigma_2}}}).$ The LLR value of $x_{\sigma_2}$ can be updated in the same way.\\ 
\\
\noindent {\bf Important Notes.} We mentioned before that no a-priori information is required for the convergence in our iterative decoding model. Instead, after the first iteration of updating the check nodes, some biased information for some variables would be obtained. The biased information mainly comes from quadratic checks in Class III and quadratic checks without any linear terms. As we mentioned, for a quadratic check in  Class III $x_{\sigma_1} + x_{\sigma_2}x_{\sigma_3} = 0$,  $x_{\sigma_1}$ can become highly biased soon: i.e., $P(x_{\sigma_1} = 0) = 0.75, P(x_{\sigma_1} = 1) = 0.25$ and the outgoing LLR value over the edge would become $\log 3$ instead of zero. Similarly, for a check $x_{\sigma_1}x_{\sigma_2} = 0$, the outgoing LLR values for $x_{\sigma_1}, x_{\sigma_2}$ would become $\log 2$. Such biased information can then be propagated and spread to other nodes during the iterations, which plays an important role for the convergence and correctness of the iterative decoding. Obviously, the more accurate biased information is obtained, the more likely the system would be to correctly converge, which explains why we try to avoid guessing a variable in a quadratic equation in Class III during the guessing phase.

\medskip

\noindent{\bf Steps 3, 4} just follow the details described in {\bf Steps 3, 4} in Section \ref{subsec:iter_recap}.
\subsection{Theoretical Analysis}
%{\color{red}Thomas rephrase}\\
A deeper theoretical investigation into the complexity is hard, as we have much more complicated model compared to classical iterative decoding: we have irregular checks which can be linear or quadratic, and have different localities; besides, there is no available a-priori information. Instead, we derive a very rough estimation for the asymptotic complexity. Under the same amount of guesses as that in the guess-and-determine attack, the guess-and-decode attack would succeed with a large probability. When one guessing path results in a failure, several other paths can be tried and it is highly likely that one would succeed. We allow for a constant number of iterations, and in each iteration, we need to go through $O(n^s)$ nodes. Thus the rough asymptotic complexity can be expressed as $O(n^s2^{\frac{n^{2-s}}{4}})$.
\subsection{Experimental Verification}
We have done extensive experiments to verify the attack. We first test the success probabilities of correctly recovering a secret under different numbers of guesses $\ell$ for the challenge parameters suggested in \cite{couteau2018concrete}, averaged over 5000 independent instances for each $(n,s,\ell)$ parameter set. 
%When the guessing phase is done, we first check if the secret could already be fully recovered correctly before going into iterations of belief propagation.

\begin{figure*}[!ht]
	\centering
	\includegraphics[width=0.9\textwidth]{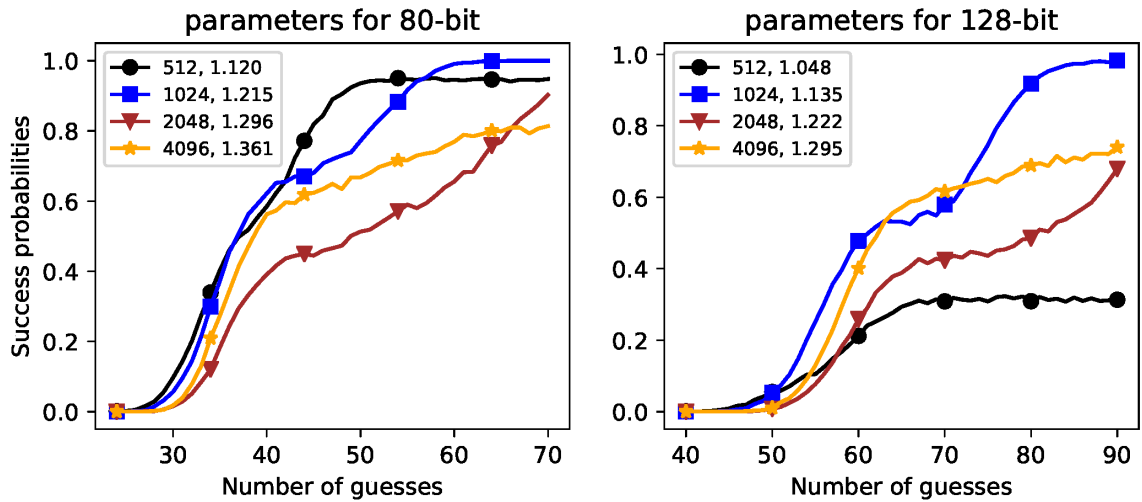}
	\caption{Success probabilities under different number of guesses}
	\label{fig:probability}
\end{figure*}

Fig. \ref{fig:probability} illustrates the results of success probabilities when maximally 100 iterations are allowed, where the left and right sub-figures are for the parameters suggested for 80-bit and 128-bit security levels in \cite{couteau2018concrete}, respectively. 
%For the parameters aiming for the 80-bit security level, we ran the number of guesses from 26 to 70; while for 128-bit ones, we ran a larger range from 45 to 90.
One can see that the success probabilities increase with the number of guesses, though sometimes with very small fluctuations. We found in the results that under different numbers of guesses, the required numbers of iterations for the convergence of a system vary: basically, when more variables are guessed, fewer iterations are required. Particularly, when guessing more than some certain number of variables, a secret could be recovered without going into iterations of belief propagation in some instances, which could happen with two cases as we mentioned before. 

\begin{figure*}[!ht]
	\centering
	\includegraphics[width=0.85\textwidth]{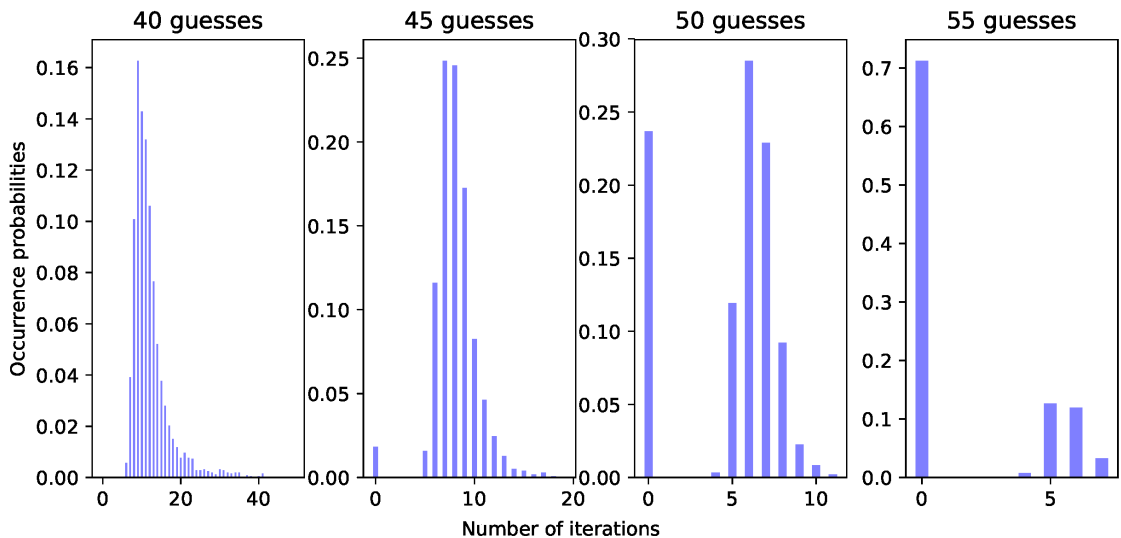}
	\caption{Distributions of number of iterations for successful instances}
	\label{fig:different_distri}
\end{figure*}

%We found that the more variables are guessed, the more likely it happens to recover a secret without going into the more iterations of belief propagation. 
%In the second rising parts of the curves, this happens with a high probability and plays the leading role for the successful key recovery. 
Fig. \ref{fig:different_distri} shows the  distributions of the required numbers of iterations for convergence when guessing 40, 45, 50 and 55 variables, respectively, when $n = 1024, s = 1.215$. The instances with zero iterations denote those which can be solved without iterative decoding. We can see that with the increase in the number of guesses, the proportion of these instances is increasing: when guessing 40 variables, all the successful recoveries are solved through iterative decoding; while when guessing 55 variables, 71.0\% of the successful instances can be solved without going into iterative decoding.   

We also see from Fig. \ref{fig:different_distri} that the required numbers of iterations for the vast majority of instances are below 100, and mostly below 40, and the average required number of iterations decreases when the number of guesses increases. This applies to all instances under different parameter sets. In Fig. \ref{fig:distri_iterations} we give an example to further illustrate this observation. The left sub-figure shows the distribution of the required number of iterations for 97259 independent successfully inverted instances when  guessing 26 to 51 variables for $n = 1024, s = 1.215$ and 44 to 69 variables for $n = 1024, s = 1.215$\footnote{We did not consider the required numbers of iterations when guessing more variables since most would be zero and here we only want to show  that the required numbers of iterations are mostly below 100.}.
The right sub-figure shows the average required number of iterations under different numbers of guesses. Basically, the required number of iterations decreases when the number of guesses increases (with some exceptions in the beginning due to the small number of samples), which is because more equations with lower localities could be obtained when more variables are guessed, enabling faster convergence. Specifically, when guessing more than 53 and 75 variables when $s = 1.215$ and $s = 1.135$, respectively, the required numbers of iterations for most instances are zero, meaning that we can recover the secret directly without iterative decoding.

%We further found that the required number of iterations are small, mostly below 100, see the left sub-figure in \ref{fig:distri_iterations} which shows the distribution of iteration numbers of 34859 independent successful instances for $n = 1024$. One can see that most cases require less than 40 iterations and especially below 20 iterations. Besides, we observed that the number of average iterations decreases when the number of guesses increases for all the parameter pairs, and when guessing a enough large number of variables, we even do not need to go into the iterative decoding. 

\begin{figure*}[!ht]
	\centering
	\includegraphics[width=0.9\textwidth]{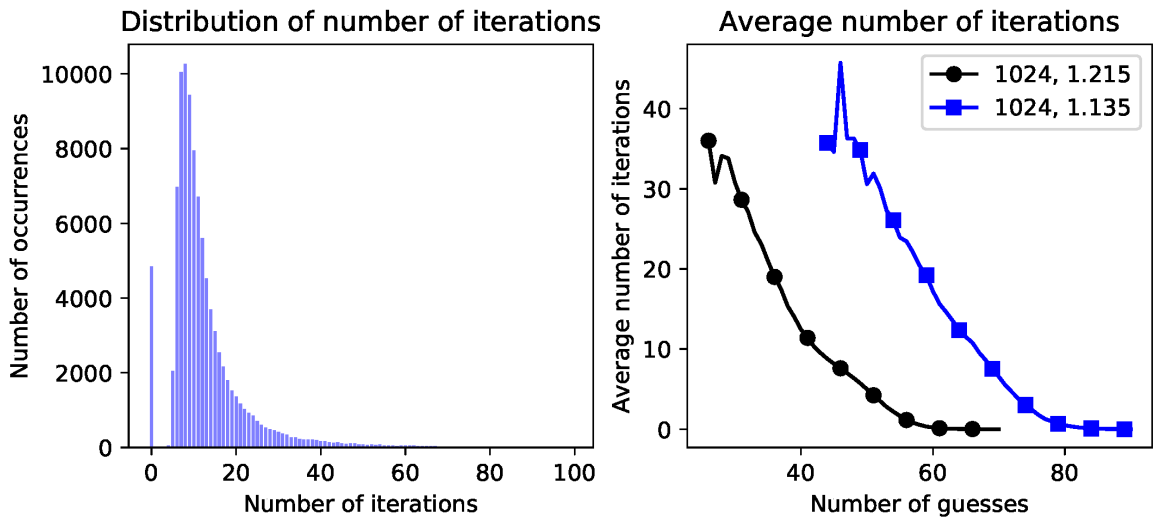}
	\caption{More results about the required number of iterations}
	\label{fig:distri_iterations}
\end{figure*}

We also tested the distribution of number of guesses required to correctly recover a secret. We ran 5000 independent instances for parameter sets $n = 1024,s = 1.215$, and $n = 1024,s = 1.135$, which respectively aim for 80-bit and 128-bit security levels in \cite{couteau2018concrete}. We start from guessing a small number of variables, and continue with guessing one more variable if the recovery fails, until the secret is correctly recovered. The results are shown in Fig. \ref{fig:num_guess_distri}. We can obviously see from the figure that instances aiming for 80-bit security level require fewer guesses than those for 128-bit security level, for which the peak values are respectively achieved around 36 and 59 guesses. 
\begin{figure}[!ht]
	\centering
	\includegraphics[width=0.5\textwidth]{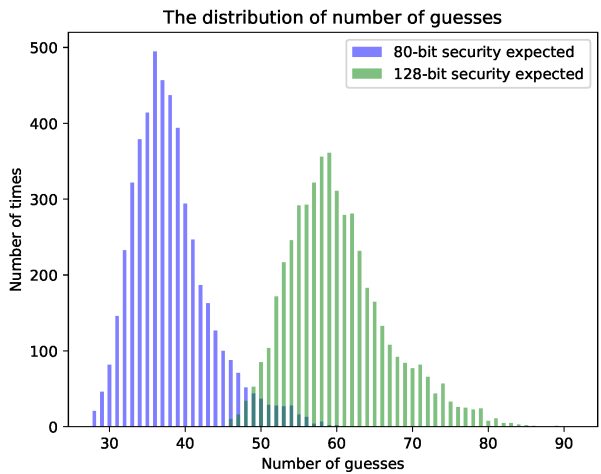}
	\caption{Distribution of number of guesses for $n = 1024$}
	\label{fig:num_guess_distri}
\end{figure}
\subsection{Complexity} \label{subsec:simulated_complexity}
In our guessing strategies, we always prefer to guess the variable which appears most often in a local class or in the global system, which is a greedy method. Actually, if we choose to guess the sub-optimal ones, the success probabilities do not have big differences, as the main gain comes from exploiting ``free'' variables. This enables us to explore other guessing paths if the current one fails for iterative decoding.

If the success probability of the guess-and-decode attack when guessing $\ell$ variables for a given $(n, s)$ parameter set is $p$, the required number of times to perform iterative decoding could be derived by multiplying a factor of $1/p$. This can be linked to two practical attacking scenarios: 1) $1/p$ many independent systems generated by different secrets are collected and there should be one system that can be recovered; 2) or given a specific system, it is possible to recover it if we try  $1/p$ independent guessing paths. We have experimentally verified this. For example, given $n = 1024, s = 1.215$ and maximally allowed 40 iterations, the success probability when guessing 32 variables is 0.1226, which means we need to try 8.2 ($1/0.1226$) independent paths on average for attacking a system. We generated 5000 independent instances of parameter set $n = 1024, s = 1.215$ and for each instance, we continue choosing a guessing path and performing iterative decoding until we finally correctly invert the system. We computed the average number of independent guessing paths we need to try, which is 9.2, matching well with the result computed from success probabilities.  

We have shown in Fig. \ref{fig:distri_iterations} that the vast majority of instances require less than 40 iterations for convergence, thus we set the maximum allowed number of iterations as 40. For each iteration, we need to update each variable node and check node, which requires complexity $O(n^s)$. The update of each variable node is simple, which only costs several additions, as can be seen from \eqref{eq:variable_update}. For a check node, let us consider a worst case of complexity for updating, i.e., considering a check with most edges, e.g., $x_{\sigma_1} + x_{\sigma_2} + x_{\sigma_3} +x_{\sigma_4}x_{\sigma_5} = y$. For updating such a check, it requires 3 exponentiation, 5 division, 4 logarithm, 13 multiplication, 4 tangent and 4 inverse tangent  functions. These functions roughly take 1300 clock cycles if we refer to the instruction manuals for mainstream CPUs. The actual cost would be less, as there are many simpler checks which require fewer clock cycles. For example, the update of an equation check of the form $x_{\sigma_1} + x_{\sigma_2} = y$ is almost for free. We have experimentally verified the cost with our non-optimized code. For example, one iteration for a system of parameters $n = 512, s = 1.120$ consumes 0.46 milliseconds{\footnote{This includes all the clock cycles for one iteration, i.e., updating the check nodes and variable nodes, updating the distributions of variables, recovering an intermediate secret, etc.}}, running on a CPU with maximum supported clock speed 3000 MHz, corresponding to 1275 clock cycles for updating one node. There are many optimization techniques both for the iterative decoding algorithm and for the implementation details (e.g., using parallelization, using look-up tables, etc), which would largely reduce the required number of clock cycles. We believe that the constant factor is not larger than that in $O(n^2)$ for inverting a sparse matrix. 
%{\footnote{There could be a small constant factor inside $O$, which is like that there is also some factor inside the complexity $O(n^2)$ for inverting a sparse matrix in \cite{couteau2018concrete}. }}. 
Thus, the total complexity for inverting an instance of parameter set $(n, s, \ell)$  can be derived as below, up to some constant factor:
\begin{equation} \label{eq:iter_complexity}
	\frac{1}{p} 2^{\ell}\cdot 40 \cdot n^s.
\end{equation}

%However, when we guess a relatively larger number of variables $L_2$, it is possible to recover a secret according to the a-priori information of each variable after the initialization without going into the iterative decoding (or solving a linear system). If the success probability of such cases is $p_2$, the complexity for the recovery is then 
%\begin{equation} \label{eq:complexity_2}
%\frac{1}{p_2}\cdot 2^{L_2}\cdot n
%\end{equation}

Having made this clear, we could get the complexities for attacking the systems under the challenge parameters suggested in \cite{couteau2018concrete}. For every $(n, s)$ parameter set, we generate 5000 independent instances{\footnote{We generate 10 times more instances when $n=512$ and 1024, and the success probabilities do not have big differences as those computed from 5000 instances. Thus we conjecture that 5000 instances are large enough for computing a stable success probability.}} and run the attack with maximally 40 iterations allowed when guessing $\ell$ variables for different $\ell$ values. We compute the complexities using \eqref{eq:iter_complexity} according to the experimental success probabilities  for different $\ell$, and choose the smallest one as the complexity for breaking the given $(n, s)$ pair. For $n = 512$, we have some additional results: we see from Fig. \ref{fig:probability} that the success probabilities are consistently increasing and when we guess a large enough number of guesses $\ell'$, most instances can be recovered without iterative decoding. If we denote the success probability of such case as $p'$, the complexity would be $1/p' \cdot 2^{\ell'}\cdot n${\footnote{We also checked the complexities using this way for other parameter sets, but they are all higher than the ones derived through iterative decoding.}} (with $n$ included since we need to go through every secret variable.).

\begin{table*}[!ht]
	\renewcommand{\arraystretch}{1.15}
	\centering
	\caption{The Simulated Complexities for the Challenge Parameters Proposed in \cite{couteau2018concrete}}
	\label{tab:breaking}
	%\small
	% \addtolength{\tabcolsep}{-5pt}  
	\begin{threeparttable}
	\begin{tabular}{c|c|c|c|c} 
		\toprule
		Security Level & (\(n, s\)) & Number of Guesses & Success (probabilities) &Complexity \\ 
		\hline
		\addlinespace[1ex]
		80 bits & (\(512, 1.120\))$^*$ & $40$ & 718 (0.1446) &$2^{52}$\\ 
		& (\(1024, 1.215\)) & 32 & 613 (0.1226) & $2^{53}$\\
		& (\(2048, 1.296\)) & 32 & 210 (0.042) &$2^{57}$\\
		& (\(4096, 1.361\)) & 32 & 285 (0.057) &$2^{58}$\\
		\hline
		\addlinespace[1ex]
		128 bits & (\(512, 1.048\))$^*$ & 53 & 86 (0.0172) & $2^{68}$ \\
		& (\(1024, 1.135\)) & 50 & 209 (0.0418) &  $2^{72}$\\
		& (\(2048, 1.222\)) & 52 &82 (0.0164) &$2^{77}$\\
		& (\(4096, 1.295\)) & 51 & 85 (0.0166) &$2^{78}$\\
		\bottomrule		 
		%\cline{2-8}
	\end{tabular}
	\begin{tablenotes}
	\item[] The complexities for parameters with $^*$ are derived based on the instances for which iterative decoding is not needed.  
\end{tablenotes}
\end{threeparttable}
\end{table*}

Table \ref{tab:breaking} shows the complexities along with the corresponding required numbers of guesses and success probabilities. Actually, the complexities vary slow with $\ell$, and there are several $\ell$ values under which the complexities are the same to the best ones shown in the table. We can see the complexities are much lower than the claimed ones in  \cite{couteau2018concrete}. Particularly, the parameter sets suggested for 128 bits security in  \cite{couteau2018concrete} cannot even provide 80 bits security under our attack.  
%\footnote{Considering the number of successes is relatively small for the parameter set (512, 1.048), we further ran 50000 instances to verify the stability, and there are 52 successes, with probability 0.00104.}. 

\subsection{New Challenge Parameters}
\label{subsec:new_parameters}
We also suggest some practical range of parameter sets which appear to be resistant against both the proposed guess-and-determine attack and the guess-and-decode attack for 80 and 128 bits security. As done in \cite{couteau2018concrete}, we take a \(10\%\) margin to select the security parameters. Under a certain seed size $n$, we vary $s$ from a high enough value to lower ones with a relatively larger interval 0.01 (for instances of $n = 512$,  we take a smaller interval 0.001), since we actually ran the iterative decoding process and recovered the secret in our implementation which requires more running time. We choose the maximum $s$ value for which the system is not vulnerable to our attack as the conjectured stretch limit. 

Given each $(n, s)$ pair, we vary the number of guesses $\ell$ from a low enough value to a high enough value and run 5000 instances for each $(n, s, \ell)$ parameter set. We record the success probability for each $\ell$ value and further compute the corresponding complexity using \eqref{eq:iter_complexity}. For $n = 512$, we further compute the complexity for recovering the secret without using iterative decoding. We then chose the minimum one as the complexity for the $(n, s)$ pair and check if it is larger than $2^{80}$ or $2^{128}$. We chose the maximum $s$ for which the condition is satisfied as the conjectured stretch limit for challenge parameters.

Table \ref{tab:challenge_parameter_2} shows the results under different seed sizes. Further study is required to guarantee confidence in the security levels given by these parameters. One can see that the stretch limits are further narrowed down with a large gap from the ones in \cite{couteau2018concrete}. For example, when $n = 1024$, the results in \cite{couteau2018concrete} suggest that the systems with stretches smaller than 1.215 and 1.135 can provide 80 and 128 bits security, respectively, while our results show that the stretches should be smaller than 1.08 and 1.02, respectively. Besides, our results show that systems with seed sizes of 512 are not suitable for constructing local PRGs: they cannot even provide 80 bits security.
\begin{table}[!ht]
	\renewcommand{\arraystretch}{1.2}
	\centering
	\caption{Challenge Parameters for Seed Recovery Attack}
\label{tab:challenge_parameter_2}
	\begin{tabular}{c c c c c}
		\toprule
		\hline
		security level & ~~512~~&~~1024~~&~~2048~~& ~~4096~~\\ %&8192
		\hline
		$80$ &-& 1.08&1.18 & 1.26\\% &1.369 
		\hline
		$128$ & -& 1.02& 1.10 & 1.19 \\%& 1.305
		\bottomrule	 
	\end{tabular}\\%[5pt]
\end{table}  
\section{Extension to Other Predicates} \label{sec:extension}
It is interesting and of importance to know if the proposed attacks apply to other predicates and which ones are susceptible to or resistant against them. In this section, we investigate this question and focus on the two main types of predicates suggested for constructing local PRGs, i.e., XOR-AND and XOR-THR (threshold) predicates, which are defined as below:
\begin{align*}
	&\text{XOR}_k-\text{AND}_{q}:x_1+\cdots+x_k + x_{k+1}x_{k+2}\cdots x_{k+q},\\
	&\text{XOR}_k-\text{THR}_{d,q}: x_1+\cdots+x_k + \text{THR}_{d,q}(x_{k+1},\dots,x_{k+q}),
\end{align*}
where $\text{THR}_{d,q}(x_{k+1},\dots,x_{k+q})$ is a threshold function of which the value would be one only when the number of one's in $(x_{k+1},\dots,x_{k+q})$ is not less than $d$; otherwise, the value is zero. $P_5$ can be regarded as a special case of XOR-AND and XOR-THR predicates. These predicates have been investigated in \cite{couteau2018concrete, applebaum2016cryptographic, applebaum2018algebraic, meaux2019improved}.

\subsection{Extensions to Other XOR-AND Predicates}
In the survey paper \cite{applebaum2016cryptographic}, the authors asked the question ``{\it{Is it possible to efficiently invert the collection $\mathcal{F}_{P, n, m}$ for every predicate $P$ and some  $m = n^{\frac{1}{2}\lfloor {2d/3} \rfloor-\epsilon }$ for some $\epsilon > 0$?}} '' and gave ``a more concrete challenge'': XOR$_k$-{AND}$_q$ predicates with $k = 2q$. We consider these challenging predicates and investigate their resistance against our attacks. 
% ``free''

The guessing phase is similarly to what has been described in the guess-and-determine attack. Below we derive the belief propagation techniques for a general XOR$_k$-{AND}$_q$ predicate. 

%\todo{in \cite{applebaum2016cryptographic}, the suggested stretches of challenging parameters are much larger, could consider larger stretch}.
%\todo{could also consider $n = 256$}

For an XOR$_k$-{AND}$_q$ equation $x_1 + x_2 + \cdots + x_k + x_{k+1}x_{k+2} \cdots x_{k+q} = y$, suppose the incoming LLR values for $x_1, x_2,...,x_{k+q}$ are $L_{x_1}, L_{x_2},\dots, L_{x_{k+q}}$, respectively. Let $x_h$ denote the AND term, i.e.,  $x_h = x_{k+1}x_{k+2} \cdots x_{k+q}$, such that we get $P(x_h^1 ) = p_{x_{k+1}}^1 p_{x_{k+2}}^1\cdots p_{x_{k+q}}^1$, $P(x_h^0)= 1-P(x_h^1 )$. The equivalent incoming LLR value of $x_h$,  denoted as $L_{x_h}$, can be computed as: 
\begin{align}
&L_{x_h} = \log \frac{1-p_{x_{k+1}}^1 p_{x_{k+2}}^1\cdots p_{x_{k+q}}^1}{ p_{x_{k+1}}^1 p_{x_{k+2}}^1\cdots p_{x_{k+q}}^1}\nonumber \\
&= \log \left((1 + e^{L_{x_{k+1}}})(1 + e^{L_{x_{k+2}}})\cdots(1 + e^{L_{x_{k+q}}}) - 1\right).
\end{align}
The outgoing LLR value of a linear term, say  $x_1$, can thus be computed as:
\begin{align}
L_{x_1} = L_{x_2} \boxplus L_{x_3} \boxplus \cdots \boxplus L_{x_h}. 
\end{align}
Similarly, if $y=1$, $L_{x_1}$ should be the negative value. The outgoing LLR values for other linear terms can be derived in the same way. We next show how to update the LLR value of a variable involved in the AND term, say $x_{k+1}$ without loss of generality. We denote the LLR value of the linear part $x_l= x_1 + x_2 +\cdots+ x_k$ as $L_{x_l}$, which can be computed as $L_{x_l} =  L_{x_1} \boxplus L_{x_2} \boxplus \cdots \boxplus L_{x_k}$.

Denote the product of other variables in the AND term excluding $x_{k+1}$ as $x_h'$, i.e., $x_h' = x_{k+2}x_{k+3}\cdots x_{k+q}$. For the combinations $(x_l, x_{k+1}, x_h')$, the possible values to validate the check are $(y, 0,0)$, $(y, 0,1)$, $(y, 1,0)$, $(y+1, 1,1)$. The updated LLR value for $x_{k+1}$ can then be computed as: 
\begin{align} \label{eq:update_xor_and}
L_{x_{k+1}} &= \log \frac{p_{x_l}^y p_{x_{k+1}}^0 p_{x_h'}^0 + p_{x_l}^y  p_{x_{k+1}}^0 p_{x_h'}^1 }{p_{x_l}^y p_{x_{k+1}}^1 p_{x_h'}^0 + p_{x_l}^{y+1} p_{x_{k+1}}^1 p_{x_h'}^1} \nonumber \\
&=\log \frac{p_{x_{k+1}}^0}{p_{x_{k+1}}^1} + \log \frac{p_{x_l}^y }{p_{x_l}^y p_{x_h'}^0 + p_{x_l}^{y+1} p_{x_h'}^1}.
\end{align}
With the ``intrinsic'' part excluded, i.e., the first term in the second line of \eqref{eq:update_xor_and}, the outgoing LLR value of $x_{k+1}$ can be computed as
\begin{align}\label{eq:out_xor_and}
L_{x_{k+1}} &= \log \frac{p_{x_l}^y }{p_{x_l}^y p_{x_h'}^0 + p_{x_l}^{y+1} p_{x_h'}^1},
\end{align}
where $p_{x_{h'}}^1 = p_{x_{k+2}}^1p_{x_{k+3}}^1...p_{x_{k+q}}^1$ and $p_{x_{h'}}^0 =  1-p_{x_{h'}}^1$. For an equation having only the higher-degree term, e.g., $ x_{k+1}x_{k+2} \ldots x_{k+q} = 0$, we get $L_{x_{k+1}} = \log \frac{1}{p_{x_h'}^0}$. 
%the possible combinations of $(x_{k+1}, x_{h'})$ are $(0, 0),(0, 1),(1, 0)$. We could get the outgoing LLR value for $x_{k+1}$ with the method in a way how \ref{eq:quad_only} is derived:
%\begin{align} \label{eq:higher}
%L_{x_{k+1}}^o =  \log \frac{1}{ p_{x_{h'}}^0} .
%\end{align}

\noindent{\bf Experimental results} We investigate several challenging XOR$_k$-AND$_q$ predicates with $k = 2q$, but for relatively small localities, since iterative decoding typically applies to sparse systems. Specifically, we investigate the following three concrete predicates:
\begin{align}
&\text{XOR}_4-\text{AND}_2: x_1 +x_2+x_3 + x_4 + x_5x_6,\\
&\text{XOR}_6-\text{AND}_3: x_1 + x_2+ \cdots + x_6 + x_7x_8x_9,\\
&\text{XOR}_8-\text{AND}_4: x_1 + x_2 + \cdots+ x_8 + x_9x_{10}x_{11}x_{12}.
\end{align}
Fig. \ref{fig:extensions} shows the success probabilities of the guess-and-decode attack applied on these three predicates under certain stretches. Basically, the attack works better when the locality is lower. For example, the attack applies to XOR$_4$-AND$_2$ well, while not so good to XOR$_6$-AND$_3$ and XOR$_8$-AND$_4$ predicates. We could get the attacking complexities for these parameters using \eqref{eq:iter_complexity}, and results show that XOR$_4$-AND$_2$ under parameter sets $(512, 1.3)$ and $(1024, 1.3)$ cannot provide security levels of 80 bits and 128 bits, respectively; XOR$_6$-AND$_3$ under $(512, 1.4)$ cannot achieve 128 bits security; while the attacking complexity for XOR$_8$-AND$_4$ under $(512, 1.45)$ is $2^{130}$.
\begin{figure}[!ht]
	\centering
	\includegraphics[width=0.46\textwidth]{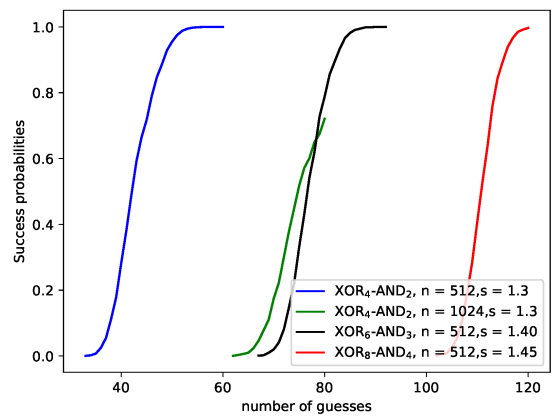}
	\caption{Success probabilities for XOR-AND predicates under different parameters}
	\label{fig:extensions}
\end{figure}

We further investigate the stretch regimes above which the predicates cannot achieve 128 bits security following the way in Section \ref{subsec:new_parameters}. For each $(n, s, \ell)$ set, we run 2000 instances. Table \ref{tab:weak_xor_and} shows the results of stretch limits above which the corresponding predicates under the given seed sizes cannot provide 128 bits security. We failed to get the results when $n \geq 2048$ (and $n = 1024$ for XOR$_8$-AND$_4$ predicate) due to the large computation overhead, as the $s$ limits would be much larger. One can see from the results that these predicates have better resistance than $P_5$ against our attack.

\begin{table}[!t]
	\renewcommand{\arraystretch}{1.2}
	\centering
	\caption{The Stretch Limits for Different XOR-AND Predicates}
\label{tab:weak_xor_and}
	%\small
	% \addtolength{\tabcolsep}{-5pt} 
		\begin{threeparttable} 
	\begin{tabular}{c c c c} 
		\toprule
		\hline
		$n$ & XOR$_4$-AND$_2$ & XOR$_6$-AND$_3$ & XOR$_8$-AND$_4$ \\
		\hline
		512& < 1 &  1.17 & 1.46\\ %38, 558
		\hline
		1024& 1.16 &  1.51& -\\ %63, 17
		\bottomrule		 
		%\cline{2-8}
	\end{tabular}
	\begin{tablenotes}
	\item[] Stretches above these values are vulnerable.  
\end{tablenotes}
\end{threeparttable}
\end{table}

\subsection{Extensions to XOR-THR Predicates}
%Another popular predicate is $\text{XOR-MAJ}_{a,b}$ defined as below:
%\begin{align*}
%(z_1 \oplus z_2 \oplus \ldots \oplus z_a) \oplus \text{MAJ}(z_{a+1}, z_{a+2} \ldots z_{a+b})
%\end{align*}
In \cite{applebaum2018algebraic}, the authors show that predicates with high resiliency and high degrees are not sufficient for constructing local PRGs. They proposed a new criterion called rational degree against algebraic attacks and suggested XOR-MAJ (majority) predicates, which can be generalized to the XOR-THR predicates.
%showed that any $k-$ resilient predicate with rational degree $e$ is $s-$pseudorandom against linear attacks and algebraic attacks as long as $k \geq 5s, e > 18s$. 
In \cite{meaux2019improved}, the properties of XOR-THR predicates are investigated and the special case XOR-MAJ predicates are used to build the new version of FLIP, FiLIP, which is a construction designed for homomorphic encryption. It is mentioned in the paper that ``no attack is known relatively to the functions XOR$_k$-THR$_{d,2d}$ or XOR$_k$-THR$_{d,2d-1}$ since $k \geq 2s$ and $d \geq s$''.  These predicates are actually XOR-MAJ predicates.

We try to apply the guess-and-decode attack to these predicates. For the guessing phase, the cases how free variables are obtained differs a bit compared to the XOR-AND predicates. Specifically, there are two cases a free variable could be obtained:
\begin{itemize}
	\item[(1)]For an equation with one linear term and a THR term, if the value of the THR term is known, for example, there are already not less than $d$ one's or more than $q - d$ zero's  in the THR term, the linear term can be freely derived;
	\item[(2)]For an equation which only has a THR term, if there already exist $d-1$ one's in the THR term while the value for the equation is zero, every other variable could be derived as zero for free; on the other hand, if there are $q - d$ zero's while the value of the equation is one, all the other variables could be derived as one for free.
\end{itemize}

After the guessing phase, the iterative decoding is performed. 
% If many equations with the form like $x_1 + \text{T}(x_2,x_3,...., x_{52}) = 0$ can be derived after guessing some variables, some variables, e.g., $x_1$ in this case, could become biased quickly and this biased information could be propagated through the whole graph. 
For an equation $ x_1 +\cdots+x_k + \text{THR}(x_{k+1}, \ldots, x_{k+q})=y$, suppose the incoming LLR values of the variables are $L_{x_1},L_{x_2},\ldots, L_{x_{k+q}}$, respectively. When computing the outgoing LLR values for the linear terms, say $x_1$ without loss of generality, we need to first compute the equivalent incoming LLR value of the THR term. We denote the THR term as $x_t$ and its LLR value as $L_{x_t}$. The number of the combinations of  $(x_{k+1}, \ldots, x_{x+q})$ that make $\text{THR}(x_{k+1}, \ldots, x_{x+q})$ one is $\sum_{w = d}^{q} \binom{q}{d}$. If we denote the set of these combinations as $\mathcal{S}$, we can get
\begin{align} \label{eq:subset_probability}
P(x_t^1) = \sum_{(c_{k+1}, \ldots c_{x+q}) \in \mathcal{S}} \prod_{i = 1}^{q} p_{x_{k+i}}^{c_{k+i}},
\end{align}
where $(c_{k+1}, \ldots, c_{x+q})$ are the possible values in $\mathcal{S}$ and every value is either one or zero. $P(x_t^0) =  1- P(x_t^1)$ and thus $L_{x_t} = \log \frac{P(x_t^0)}{P(x_t^1)}$. The LLR value of $x_1$ can be computed as $ L_{x_1} = L_{x_2} \boxplus \ldots \boxplus L_{x_k} \boxplus L_{x_t}$.

If there are some variables in the $\text{THR}(x_{k+1}, \ldots x_{x+q})$ term being fixed, e.g., guessed or determined during the guessing phase, we can just fix its probability of being the fixed value as one, while zero for the complement value.  

We next show how to compute the outgoing LLR value of a variable in the $\text{THR}(x_{k+1}, \ldots x_{x+q})$ term, say $x_{k+1}$ without loss of generality. Let $x_l$ denote the linear part, i.e., $x_l = x_1 + ...+x_k$, then the LLR value of it, denoted $L_{x_l}$, is derived as $L_{x_l} = L_{x_1} \boxplus \ldots \boxplus L_{x_k}$. Denote the set of combinations of $(x_{k+2}, x_{k+3},\ldots, x_{x+q})$ that have no less than $d-1$ one's as $\mathcal{W}$, and $\bar{\mathcal{W}}$ for the complement set. $P(W), P(\bar{W})$ can be computed using the same way as in \eqref{eq:subset_probability}. Then we would get:
\begin{align}
	&P(x_{k+1} = 1) \nonumber \\
	&= p(x_{k+1} = 1, x_l =  y+1, \mathcal{W} ) + p(x_{k+1} = 1, x_l = y, \bar{\mathcal{W}}) \nonumber \\
	&= p_{x_{k+1}}^1\cdot  p_{x_{l}}^{y+1}\cdot P(\mathcal{W}) + p_{x_{k+1}}^1 \cdot p_{x_{l}}^y\cdot P(\bar{\mathcal{W}}) 
	%& = p(x_{k+1} = 1)\cdot p(x_l = 1+y)\cdot p(\mathcal{S})\\ 
	%&~~+ p(x_{k+1} = 1)\cdot p(x_l = y) \cdot p(\bar{\mathcal{S}})\\
	%&=p_{x_{k+1}}^1 \cdot p_{x_{q}}^{y+1} \cdot \sum_{\mathcal{S}} p_{x_{k+2}}^{v_{k+2}} \cdot ...\cdot p_{x_{k+q}}^{v_{k+q}}\\
	%&~~+ p_{x_{k+1}}^1 \cdot p_{x_{q}}^y \cdot \sum_{\bar{\mathcal{S}}} p_{x_{k+2}}^{v_{k+2}} \cdot ...\cdot p_{x_{k+q}}^{v_{k+q}} 
\end{align}
%p_{x_{k+1}}^1 \cdot p_{x_{k+q}}^1 \cdot \sum_{(x_{k+2}^{v_2},...,x_{k+q}^{v_l}) \in \mathcal{S}} p_{x_{k+2}}^{v_2} \cdot ...\cdot p_{x_{k+q}}^{v_l} + p_{x_{k+1}}^1 \cdot p_{x_{k+q}}^0 \cdot \sum_{(x_{k+2}^{v_2},...,x_{k+q}^{v_l}) \in \bar{\mathcal{S}}} p_{x_{k+2}}^{v_2} \cdot ...\cdot p_{x_{k+q}}^{v_l} 

Denote the set of combinations of $(x_{k+2}, x_{k+3},\ldots, x_{x+q})$ that have no less than $d$ one's as $\mathcal{V}$, and $\bar{\mathcal{V}}$  for the complement set. We can get the probability of $p(x_{k+1} = 0)$ using the same way as below:
\begin{align}
	&P(x_{k+1} = 0) \nonumber \\
	&= p(x_{k+1} = 0, x_l = y+1, \mathcal{V} ) + p(x_{k+1} = 0, x_l = y, \bar{\mathcal{V}})\nonumber \\
	&=p_{x_{k+1}}^0 \cdot p_{x_{l}}^{y+1} \cdot P(\mathcal{V}) + p_{x_{k+1}}^0 \cdot p_{x_{l}}^y \cdot P(\mathcal{\bar{V}})
	%& = p(x_{k+1} = 0)\cdot p(x_l = 1) \cdot p(\mathcal{V}) \\
	%&~~+ p(x_{k+1} = 0)\cdot p(x_l = y) \cdot p(\bar{\mathcal{V}})\\
	%&=p_{x_{k+1}}^0 \cdot p_{x_{q}}^{y+1} \cdot \sum_{\mathcal{V}} p_{x_{k+2}}^{v_{k+2}} \cdot ...\cdot p_{x_{k+q}}^{v_{x+q}} \\
	%&~~+ p_{x_{k+1}}^0 \cdot p_{x_{q}}^y \cdot \sum_{\bar{\mathcal{V}}} p_{x_{k+2}}^{v_{k+2}} \cdot ...\cdot p_{x_{k+q}}^{v_{k+q}} 
\end{align}

With the ``intrinsic'' part being excluded, the outgoing LLR value of $x_{k+1}$ can be computed as 
\begin{align}
	L_{x_{k+1}} = \frac{ p_{x_{l}}^{y+1} \cdot P(\mathcal{V}) +  p_{x_{l}}^y \cdot P(\mathcal{\bar{V}})}{p_{x_{l}}^{y+1}\cdot P(\mathcal{S}) + p_{x_{l}}^y\cdot P(\bar{\mathcal{S}}) }.
\end{align}
The LLR values of other variables in the THR term can be derived using the same way. One can see that when $q$ is large, computing the combinations would require large overhead, introducing better resistance against our attack. Thus we only consider four concrete XOR-THR predicates, which are actually XOR-MAJ predicates: XOR$_3$-MAJ$_3$, XOR$_3$-MAJ$_4$, XOR$_4$-MAJ$_3$, XOR$_4$-MAJ$_4$.

\begin{figure}[!ht]
	\centering
	\includegraphics[width=0.48\textwidth]{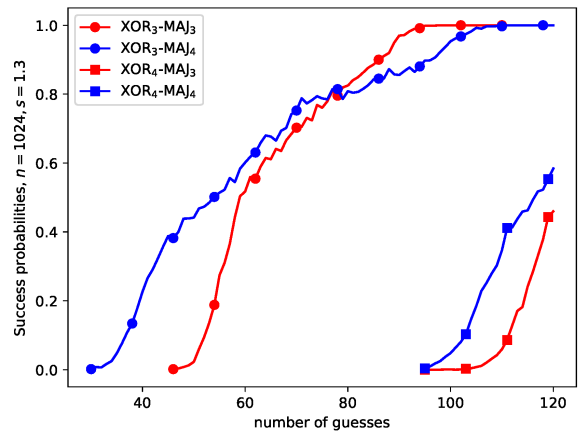}
	\caption{Success probabilities for XOR-MAJ predicates for $n = 1024, s = 1.3$}
	\label{fig:success_xor_maj}
\end{figure}

Fig. \ref{fig:success_xor_maj} shows the success probabilities under different numbers of guesses when $n = 1024, s = 1.3$. We can see that our guess-and-decode attack applies to these predicates well. Using \eqref{eq:iter_complexity}, we can get that when $n = 1024, s = 1.3$, local PRGs instantiated on XOR$_4$-MAJ$_3$, XOR$_4$-MAJ$_4$ predicates  cannot achieve 128 bits security, while XOR$_3$-MAJ$_3$, XOR$_3$-MAJ$_4$ cannot even achieve 80 bits security.\\
\\
{\bf Discussion.} One important observation from Fig. \ref{fig:success_xor_maj} is that the number of linear terms matters more than the number of variables in the MAJ term. For example, the differences of success probabilities between predicates with a same number of linear terms, e.g., XOR$_4$-MAJ$_3$ and XOR$_4$-MAJ$_4$, are much smaller than those between predicates with a same number of variables in the MAJ term, e.g., XOR$_3$-MAJ$_3$ and XOR$_4$-MAJ$_3$. This makes sense since iterative decoding applies well to sparse systems, and it is more sensitive to the number of terms in a check: the MAJ term can be regarded as one special term. Another interesting observation is that XOR$_i$-MAJ$_3$, $i \in [3,4]$ are more resistant against our attack compared to XOR$_i$-MAJ$_4$, $i \in [3,4]$. This could be because that the predicates are balanced when $q = 2d -1$,  while not so when $q = 2d$. It is pointed out in \cite{meaux2019improved} that the {\it resiliency} of a XOR$_k$-THR$_{d,q}$ is $k$ if $q = 2d -1$, while $k-1$, otherwise. The predicates used in the suggested FiLIP instances in \cite{meaux2019improved} are all of the type XOR$_k$-THR$_{d,2d-1}$.  Our results match well with the analysis in \cite{meaux2019improved} and serve as a direct illustration of how the {\it resiliency} of a predicate would affect its security. 

We further ran extensive experiments to get the limit stretches above which the local PRGs instantiated on different predicates are susceptible to our attack for 128 bits security. Table \ref{tab:weak_xor_maj} shows the results. One can see that size 512 is not suitable for constructing efficient local PRGs instantiated on these given predicates.

\begin{table}[!ht]
	\renewcommand{\arraystretch}{1.2}
	\centering
	\caption{The Stretch Limits for Different XOR-THR Predicates}
\label{tab:weak_xor_maj}
	%\small
	% \addtolength{\tabcolsep}{-5pt}  
	\begin{threeparttable}
	\begin{tabular}{c c c c c} 
		\toprule
		\hline
		$n$ & XOR$_3$-MAJ$_3$ & XOR$_3$-MAJ$_4$ & XOR$_4$-MAJ$_3$& XOR$_4$-MAJ$_4$\\
		\hline
		512& <1 & < 1  & 1.03 & 1.02\\ %38, 558
		\hline
		1024& 1.10  & 1.06 & 1.30& 1.26\\ %63, 17
		\hline
		2048 & 1.26 & 1.18 & 1.48 & 1.46 \\
		\bottomrule		 
		%\cline{2-8}
	\end{tabular}
	\begin{tablenotes}
	\item[] Stretches above these values are vulnerable. 
\end{tablenotes}
\end{threeparttable}
\end{table}

\section{Concluding Remarks} \label{sec:conclusion}
We have presented a novel guess-and-determine attack and a guess-and-decode attack on Goldreich's pseudorandom generators instantiated on the \(P_{5}\) predicate, greatly improving the attack proposed in \cite{couteau2018concrete}. Both attacks work based on similar guessing strategies: we try to explore as many variables which can be determined for free as possible. In the guess-and-decode attack, we use a modified iterative decoding method to solve the resulting system after guessing a certain number of variables, which provides a new idea to solve a sparse quadratic system. We broke the candidate non-vulnerable parameters given in \cite{couteau2018concrete} with a large gap and suggested some new challenge parameters which could be targets for future investigation. 

The attacks further narrow the concrete stretch regime of  Goldreich's pseudorandom generators instantiated on the \(P_{5}\) predicate and largely shake the confidence in their efficiency when the seed sizes are small. 
%The ideas might apply to attacking MPC-friendly symmetric ciphers with similar structures as well. %Specifically, the seed size 512 would be insufficient.

We further extend the attacks to investigate some other predicates of the XOR-AND and XOR-MAJ type, which are suggested as research target for constructing local PRGs. Generally, local PRGs instantiated over predicates with low localities show susceptibility to our attacks. Especially, the number of linear terms plays an important role in resisting against our attacks. It is good to have more than six linear terms if large stretches are desired. For the non-linear part of a predicate, it is better to be balanced than being unbalanced to resist against our attacks. If in some extreme cases local PRGs instantiated on predicates with low localities are required, our attacks could be helpful for choosing a safe stretch. 

The attack might apply to other predicates with similar structures as long as they have low localities.

%%% Local Variables:
%%% mode: latex
%%% TeX-master: "../pseudorandom_generator"
%%% End:

\bibliographystyle{IEEEtran}
% argument is your BibTeX string definitions and bibliography database(s)
\bibliography{p5}
%

%\section*{Acknowledgment}

%The authors would like to thank...

% Can use something like this to put references on a page
% by themselves when using endfloat and the captionsoff option.
\ifCLASSOPTIONcaptionsoff
  \newpage
\fi

\end{document}